\renewcommand{\textbf}[1]{\begingroup\bfseries\mathversion{bold}#1\endgroup}
\newtheorem{thm}{Theorem}[section]
\newtheorem{defi}{Definition}[section]
\newtheorem{corollary}[thm]{Corollary}
\newtheorem{prop}[thm]{Proposition}
\theoremstyle{definition}
\newtheorem{remark}[thm]{Remark}
\newtheorem{example}[thm]{Example}
\newtheorem{examples}[thm]{Examples}
\newlength{\bibitemsep}\setlength{\bibitemsep}{.1\baselineskip plus
\newlength{\bibparskip}\setlength{\bibparskip}{0pt}
\let\oldthebibliography\thebibliography \renewcommand\thebibliography[1]{
  \oldthebibliography{#1} \setlength{\parskip}{\bibitemsep}
  \setlength{\itemsep}{\bibparskip} }
\newcommand{\R}{\mathbb R}
\newcommand{\Z}{\mathbb Z}
\newcommand{\N}{\mathbb N}
\numberwithin{equation}{section}
\def\XXint#1#2#3{{\setbox0=\hbox{$#1{#2#3}{\int}$}
    \vcenter{\hbox{$#2#3$}}\kern-.5\wd0}}
\date{date}
\begin{document}
\title{Effect of periodic arrays of defects on lattice energy minimizers}
\author{Laurent B\'{e}termin\\ \\
Faculty of Mathematics, University of Vienna,\\ Oskar-Morgenstern-Platz 1, 1090 Vienna, Austria\\ \texttt{laurent.betermin@univie.ac.at}. ORCID id: 0000-0003-4070-3344 }
\date\today
\maketitle

\begin{abstract}
We consider interaction energies $E_f[L]$ between a point $O\in \mathbb{R}^d$, $d\geq 2$, and a lattice $L$ containing $O$, where the interaction potential $f$ is assumed to be radially symmetric and decaying sufficiently fast at infinity. We investigate the conservation of optimality results for $E_f$ when integer sublattices $k L$ are removed (periodic arrays of vacancies) or substituted (periodic arrays of substitutional defects). We consider separately the non-shifted ($O\in k L$) and shifted ($O\not\in k L$) cases and we derive several general conditions ensuring the (non-)optimality of a universal optimizer among lattices for the new energy including defects. Furthermore, in the case of inverse power laws and Lennard-Jones type potentials, we give necessary and sufficient conditions on non-shifted periodic vacancies or substitutional defects for the conservation of minimality results at fixed density. Different examples of applications are presented, including optimality results for the Kagome lattice and energy comparisons of certain ionic-like structures. 
\end{abstract}

\noindent
\textbf{AMS Classification:}  Primary 74G65 ; Secondary 82B20.\\
\textbf{Keywords:} Lattice energy; Universal optimality; Defects; Theta functions; Epstein zeta functions; Ionic crystals; Kagome lattice.

\tableofcontents

\section{Introduction, setting and goal of the paper}

\subsection{Lattice energy minimization and setting}

Mathematical results for identifying the lattice ground states of interacting systems have recently attracted a lot of attention. Even though the `Crystallization Conjecture' \cite{BlancLewin-2015} -- the proof of existence and uniqueness of a periodic minimizer for systems with free particles -- is still open in full generality, many interesting results have been derived in various settings for showing the global minimality of certain periodic structures including the uniform chain $\Z$, the triangular lattice $\mathsf{A}_2$, the square lattice $\Z^2$, the face-centred cubic lattice $\mathsf{D}_3$ (see Fig. \ref{fig:Lattices}), as well as the other best packings $\mathsf{E}_8$ and the Leech lattice $\Lambda_{24}$ (see \cite{BDLPSquare,CKMRV2Theta} and references therein). Moreover, the same kind of investigation has been made for multi-component systems (e.g. in \cite{Friedrich:2018aa,FrieKreutSquare,LuoChenWei,BFK20,LuoWeiBEC}) where the presence of charged particles yield to rich energetically optimal structures. These problems of optimal point configurations are known to be at the interface of Mathematical Physics, Chemistry, Cryptography, Geometry, Signal processing, Approximation, Arithmetic, etc. The point of view adopted in this work is the one of Material Science where the points are thought as particles or atoms.

\begin{figure}[!h]
\centering
\includegraphics[width=7cm]{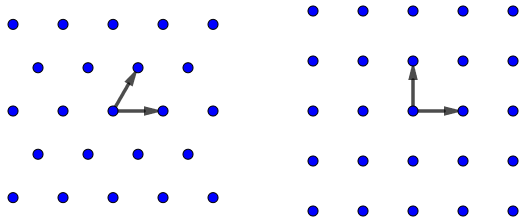} \qquad\includegraphics[width=3cm]{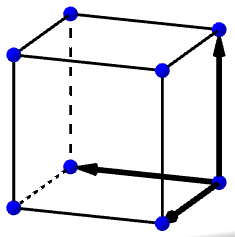}\qquad\includegraphics[width=3cm]{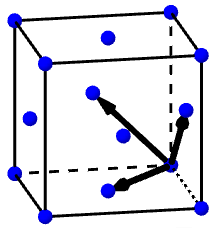}
\caption{\small{In dimension $d=2$, representation of the triangular and square lattices respectively  defined by $\mathsf{A}_2=\lambda_1\left[\Z(1,0)\oplus \Z(1/2,\sqrt{3}/2) \right]$ and $\Z^2$. In dimension $d=3$, representation of the simple cubic and the face-centred cubic lattices respectively defined by $\Z^3$ and $\mathsf{D}_3:=\lambda_2\left[\Z(1,0,1)\oplus \Z(0,1,1)\oplus \Z(1,1,0)  \right]$. The constants $\lambda_1,\lambda_2$ are such that the lattices have unit density.}}
\label{fig:Lattices}
\end{figure}

In this paper, our general goal is to show mathematically how the presence of periodic arrays of charges (called here `defects' in contrast with the initial crystal `atoms') in a perfect crystal affects the minimizers of interaction energies when the interaction between species is radially symmetric. Since the structure of crystals are often given by the same kind of lattices, it is an important question to know the conditions on the added periodic distribution of defects and on the interaction energy in order to have conservation of the initial ground state structure. Furthermore, only very few rigorous results are available on minimization of charged structures among lattices (see e.g. our recent works \cite{BFK20,MaxTheta1}).

\medskip

We therefore assume the periodicity of our systems, and once we restrict this kind of problem to the class of (simple) lattices and radially symmetric interaction potentials, an interesting non-trivial problem is to find the minimizers of a given energy per point among these simple periodic sets of points, with or without a fixed density. In this paper, we keep the same kind of notations we have used in our previous works (see e.g. \cite{OptinonCM,BetSoftTheta,BFK20}). More precisely, for any $d\geq 2$ we called $\mathcal{L}_d$ the class of $d$-dimensional lattices, i.e. discrete co-compact subgroups or $\R^d$, 
$$
\mathcal{L}_d:=\left\{ L=\bigoplus_{i=1}^d \Z u_i :\textnormal{$\{u_1,...,u_d\}$ is a basis of $\R^d$}\right\},
$$
and, for any $V>0$, $\mathcal{L}_d(V)\subset \mathcal{L}_d$ denotes the set of lattices with volume $|\det(u_1,...,u_d)|=V$, i.e. such that its unit cell $Q_L$ defined by
\begin{equation}\label{eq:QL}
Q_L:=\left\{x=\sum_{i=1}^d \lambda_i u_i : \forall i\in \{1,...,d\}, \lambda_i\in [0,1)  \right\},
\end{equation}
has volume $|Q_L|=V$. We will also say that $L\in \mathcal{L}_d(V)$ has density $V^{-1}$. The class $\mathcal{F}_d$ of radially symmetric functions we consider is, calling $\mathcal{M}_d$ the space of Radon measures on $\R_+$,
$$
\mathcal{F}_d:= \left\{f:\R_+\to \R : f(r)=\int_0^{\infty} e^{-rt}d\mu_f(t), \mu_f \in \mathcal{M}_d,|f(r)|=O(r^{-p_f}) \textnormal{ as $r\to \infty$, $p_f>d/2$}\right\}.
$$
When $\mu_f$ is non-negative, $f$ is a completely monotone function, which is equivalent by Hausdorff-Bernstein-Widder Theorem \cite{Bernstein-1929} with the property that for all $r>0$ and all $k\in \N$, $(-1)^k f^{(k)}(r)\geq 0$. We will write this class of completely monotone functions as
$$
\mathcal{F}^{cm}_d:=\left\{ f\in \mathcal{F}_d : \mu_f\geq 0  \right\}.
$$

For any $f\in \mathcal{F}_d$, we thus defined the $f$-energy $E_f[L]$ of a lattice $L$, which is actually the interaction energy between the origin $O$ of $\R^d$ and all the other points of $L$, by
\begin{equation}\label{eq:defEF}
E_f[L]:=\sum_{p\in L\backslash \{0\}} f(|p|^2).
\end{equation}
Notice that this sum is absolutely convergent as a simple consequence of the definition of $\mathcal{F}_d$. We could also define $E_f$ without such decay assumption by renormalizing the sum using, for instance, a uniform background of opposite charges (see e.g. \cite{LewinFloating}) or an analytic continuation in case of parametrized potential such as $r^{-s}$ (see \cite{Latticesums}). 

\medskip

One can interpret the problem of minimizing $E_f$ in $\mathcal{L}_d$ (or in $\mathcal{L}_d(V)$ for fixed $V>0$) as a geometry optimization problem for solid crystals where the potential energy landscape of a system with an infinite number of particles is studied in the restricted class of lattice structures. Even though the interactions in a solid crystal are very complex (quantum effects, angle-dependent energies, etc.), it is known that the Born-Oppenheimer adiabatic approximation used to describe the interaction between atoms or ions in a solid by a sum of pairwise contributions (see e.g. \cite[p. 33 and p. 945]{CondensMatter} and \cite{Tosi}) is a good model for `classical crystals' (compared to `quantum crystals' \cite{BuchananQC}), i.e. where the atoms are sufficiently heavy. Moreover, since all the optimality properties we are deriving in this paper are invariant under rotations, all the results will be tacitly considered up to rotations.

\medskip

 Furthermore, studying such interacting systems with this periodicity constraint is a good method to keep or exclude possible candidates for a crystallization problem (i.e. with free particles). We are in particular interested in a type of lattice $L_d$ that is the unique minimizer of $E_f$ in $\mathcal{L}_d(V)$ for any fixed $V>0$ and any completely monotone potential $f\in \mathcal{F}_d^{cm}$. Following Cohn and Kumar \cite{CohnKumar} notion (originally defined among all periodic configurations), we call this property the \textit{universal optimality among lattices} of $L_d$ (or \textit{universal optimality in $\mathcal{L}_d(1)$}). 

\medskip

Only few methods are available to carry out the minimization of $E_f$. Historically, the first one consists to parametrize all the lattices of $\mathcal{L}_d(1)$ in an Euclidean fundamental domain $\mathcal{D}_d\subset \R^{\frac{d(d+1)}{2}-1}$ (see e.g. \cite[Sec. 1.4]{Terras_1988}) and to study the variations of the energy in $\mathcal{D}_d$. It has been done in dimension 2 for showing the optimality of the triangular lattice $\mathsf{A}_2$ at fixed density for the Epstein zeta function \cite{Rankin,Eno2,Cassels,Diananda} and the lattice theta function \cite{Mont} respectively defined for $s>d$ and $\alpha>0$ by
\begin{equation}\label{eq:zetatheta}
\zeta_L(s):=\sum_{p\in L\backslash \{0\}} \frac{1}{|p|^s}, \qquad \textnormal{and}\qquad \theta_L(\alpha):=\sum_{p\in L} e^{-\pi \alpha |p|^2}.
\end{equation}
In particular, a simple consequence of Montgomery's result \cite{Mont} for the lattice theta function is the universal optimality among lattices of $\mathsf{A}_2$  (see e.g. \cite[Prop. 3.1]{BetTheta15}). Other consequences of the universal optimality of $\mathsf{A}_2$ among lattices have been derived for other potentials (including the Lennard-Jones one) \cite{Betermin:2014fy,BetTheta15,OptinonCM,LBMorse} as well as masses interactions \cite{BetKnupfdiffuse}. Furthermore, new interesting and general consequences of universal optimality will be derived in this paper, including a sufficient condition for the minimality of a universal minimizer at fixed density (see Theorem \ref{thm1}).

\medskip

This variational method is also the one we have recently chosen in \cite{MaxTheta1} for showing the maximality of $\mathsf{A}_2$ in $\mathcal{L}_2(1)$ -- and conjectured the same results in dimensions $d\in \{8,24\}$ for the lattices $\mathsf{E}_8$ and $\Lambda_{24}$ -- for the alternating and centered lattice theta function respectively defined, for all $\alpha>0$, by
\begin{equation}\label{eq:defthetaaltcent}
\theta_L^\pm(\alpha):=\sum_{p\in L} \varphi_\pm(p) e^{-\pi \alpha |p|^2},\quad \textnormal{and}\quad \theta_L^c(\alpha):=\sum_{p\in L} e^{-\pi\alpha |p+c_L|^2},
\end{equation}
where $L=\bigoplus_{i=1}^d\Z u_i$, $\{u_i\}_i$ being a Minkowski (reduced) basis of $L$ (see e.g. \cite[Sect. 1.4.2]{Terras_1988}), $\varphi_\pm(p):=\sum_{i=1}^d m_i$ for $p=\sum_{i=1}^d m_i u_i$, $m_i\in \Z$ for all $i$, and $c_L=\frac{1}{2}\sum_i u_i$ is the center of its unit cell $Q_L$. In particular, the alternate lattice theta function $\theta_L^\pm(\alpha)$ can be viewed as the Gaussian interaction energy of a lattice $L$ with an alternating distribution of charges $\pm 1$, which can be itself seen as the energy once we have removed $2$ times the union of sublattices $\cup_{i=1}^d (L+u_i)$ from the original lattice $L$. This result shows another example of universal optimality -- we will call it \textit{universal maximality} -- among lattices, i.e. the maximality of $\mathsf{A}_2$ in $\mathcal{L}_2(1)$ for the energies $E_f^\pm$ and $E_f^c$ defined by
\begin{equation}\label{eq:defaltercenter}
E_f^\pm[L]:=\sum_{p\in L\backslash \{0\}} \varphi_\pm(p) f(|p|^2),\quad \textnormal{or}\quad E_f^c[L]:=\sum_{p\in L} f(|p+c_L|^2),
\end{equation}
where $f\in \mathcal{F}_d^{cm}$. This kind of problem was actually our first motivation for investigating the effects of periodic arrays of defects on lattice energy minimizers, since removing two times the sublattices $2L+u_1$ and $2L+u_2$ totally inverse the type of optimality among lattices. Furthermore, this maximality result will also be used in Theorem \ref{thm02}, applied -- in the general case of a universal maximizer $L_d^\pm$ for $E_f^\pm$ in any dimension where this property could be shown -- for other potentials $\mathcal{F}_d\backslash \mathcal{F}_d^{cm}$ in Theorem \ref{thm02max} and compared with other optimality results in Section \ref{subsec:ions}.

\medskip

The second method for showing such optimality result is based on the Cohn-Elkies linear programming bound that was successfully used for showing the best packing results in dimensions 8 and 24 for $\mathsf{E}_8$ and $\Lambda_{24}$ in \cite{Viazovska,CKMRV}, as well as their universal optimality among all periodic configurations in \cite{CKMRV2Theta}. As in the two-dimensional case, many consequences of these optimality results have been shown for other potentials \cite{OptinonCM,PetracheSerfatyCryst} and masses interactions \cite{BetSoftTheta}.

\subsection{Problem studied in this paper and connection to Material Science}

The goal of this work is to investigate the effect on the minimizers of $E_f$ when we change, given a lattice $L\subset \mathcal{L}_d$ and $K\subset \N\backslash \{1\}$, a certain real number $a_k\neq 0$ of integer sublattices $k L$, $k \in K$, in the original lattice, and where the lattices $k L$ might be shifted by a finite number of lattice points $L_k:=\{p_{i,k}\}_{i\in I_k}\subset L$ for some finite set $I_k$. Writing 
\begin{equation}\label{eq:kappa}
\kappa:=\{K, A_K, P_K\},\quad K\subset \N\backslash \{1\}, \quad A_K=\{a_k\}_{k\in K}\subset \R^*,\quad P_K=\bigcup_{k\in K} L_k,\quad L_k=\{p_{i,k}\}_{i\in I_k}\subset L,
\end{equation}
the new energy $E^{\kappa}_f$ we consider, defined for $f\in \mathcal{F}_d$ and $\kappa$ as in \eqref{eq:kappa} and such that the following sum on $K$ is absolutely convergent, is given by
\begin{equation}
E^{\kappa}_f[L]:=E_f[L]-\sum_{k \in K} \sum_{i\in I_k} a_k E_f[p_{i,k}+k L].
\end{equation}
In particular, in the non-shifted case, i.e. $P_K=\emptyset$, then
\begin{equation}\label{eq:fAK}
E^{\kappa}_f[L]=E_{f_{\kappa}}[L],\quad \textnormal{where}\quad f_{\kappa}(r):=f(r)-\sum_{k\in K} a_k f(k^2 r).
\end{equation}

Since we are interested in the effects of defects on lattice energy ground states, we therefore want to derive conditions on $\kappa$ and $f$ such that $E_f$ and $E_f^\kappa$ have the same minimizers in $\mathcal{L}_d$ or $\mathcal{L}_d(V)$ for fixed $V>0$. In particular, we also want to know if the universal minimality among lattices of a lattice $L_d$ is conserved while removing or substituting integer sublattices. This a natural step for investigating the robustness of the optimality results stated in the previous section of this paper when the interaction potential is completely monotone or, for instance, of Lennard-Jones type. Furthermore, it is also the opportunity to derive new applications and generalizations of the methods recently developed in \cite{BetTheta15,OptinonCM,MaxTheta1} for more `exotic' ionic-like structures.

\medskip

Replacing integer sublattices as described above can be interpreted and classified in two relevant cases in Material Science:
\begin{enumerate}
\item If $a_k=1$, then removing only once the sublattice $kL$ from $L$ creates a periodic array of vacancies (also called periodic Schottky defects \cite[Sect. 3.4.3]{Tilley});
\item If $a_k\neq 1$, then `removing' $a_k$ times the sublattice $kL$ from $L$ creates a periodic array of substitutional defects (also called impurities), where the original lattice points (initially with charges $+1$) are replaced by points with `charges' (or `weights') $1-a_k\neq 0$.
\end{enumerate}

In Figure \ref{fig:ionic}, we have constructed three examples of two-dimensional lattices with periodic arrays of defects which certainly do not exist in the real world. In contrast, Figure \ref{fig:introionic} shows two important examples of crystal structures arising in nature: the Kagome lattice and the rock-salt structure. These examples are discussed further in Section \ref{subsec:examples}.

\medskip

\begin{figure}[!h]
\centering
\includegraphics[width=4cm]{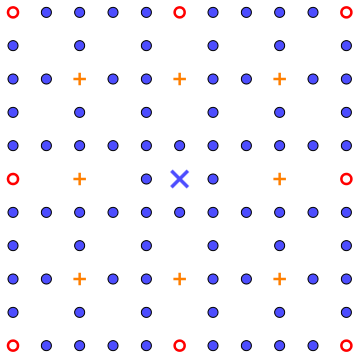} \qquad \includegraphics[width=5cm]{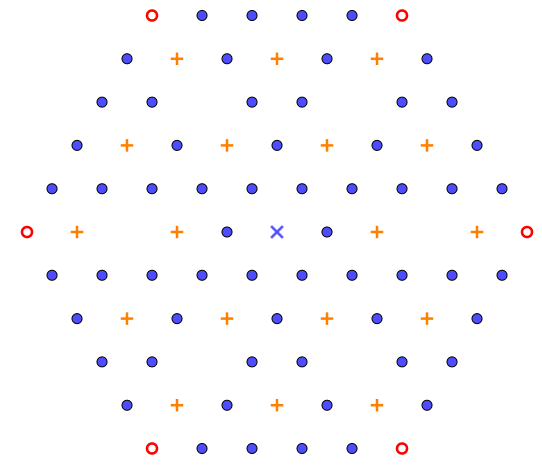} \qquad \includegraphics[width=42mm]{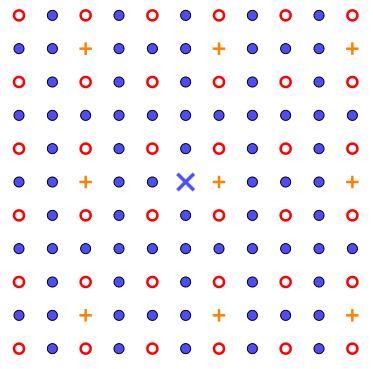} 
\caption{\small{Mathematical examples of periodic array of defects performed on a patch of the square lattice $\Z^2$ (left and right) and the triangular lattice $\mathsf{A}_2$ (middle). The cross $\textcolor{blue}{\times}$ represents the origin $O$ of $\R^2$. The points marked by $\textcolor{blue}{\bullet}$ are the original points of the lattice whereas the points marked by $\textcolor{orange}{+}$ and $\textcolor{red}{\circ}$ are substitutional defects of charge $1-a_k$ for some $a_k\in \R^*\backslash \{1\}$ and some $k\in K_:=\{2,3,4,5 \}$. The missing lattice points are the vacancy defects. The patch on the right contains two shifted periodic arrays of defects.}}
\label{fig:ionic}
\end{figure}

\begin{figure}[!h]
\centering
\includegraphics[width=5cm]{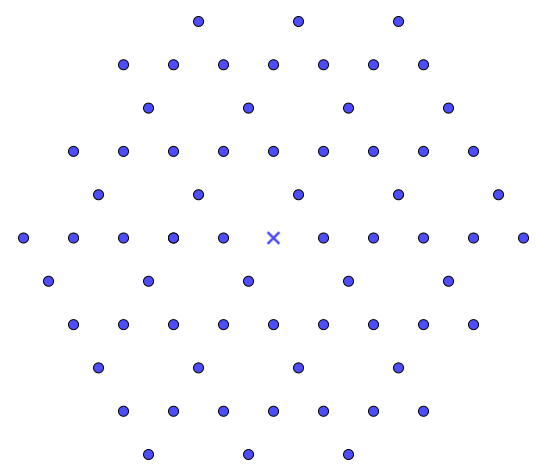} \qquad \includegraphics[width=4cm]{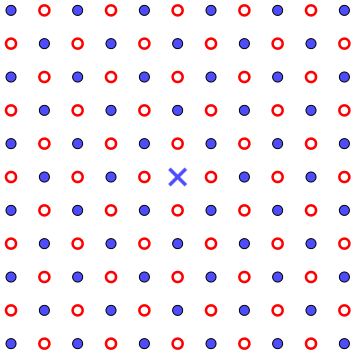}
\caption{\small{Two examples of 2d lattices patches with a periodic array of defect arising in nature. The left-hand structure is the Kagome lattice obtained by removing from the triangular lattice $\mathsf{A}_2$ the sublattice $2 \mathsf{A}_2 + (1,0)+(1/2,\sqrt{3}/2)$. It appears to be a layer of the jarosite. The right-hand structure is the 2d rock-salt structure obtained by removing from the square lattice $\Z^2$ two times the sublattices $2\Z^2 + (1,0)$ and $2\Z^2 + (0,1)$ in such a way that particles of opposites signs $\pm 1$ alternate ($\textcolor{blue}{\bullet}$ and $\textcolor{red}{\circ}$ correspond respectively to charges of signs $1$ and $-1$). It is itself a layer of the three-dimensional rock-salt structure NaCl.}}
\label{fig:introionic}
\end{figure}

 While the substitutional defects case has different interpretations and applications in terms of optimal multi-component (ionic) crystals (see e.g. Section \ref{subsec:ions}), the vacancy case is also of interest when we look for accelerating the computational time for checking numerically the minimality of a structure. Indeed, if the minimizer does not change once several periodic arrays of points are removed from all lattices, then a computer will be faster to check this minimality. This is of practical relevance in particular in low dimensions since the computational time of such lattice energies, which grows exponentially with the dimension, are extremely long in dimension $d\geq 8$ -- even with the presence of periodic arrays of vacancies -- and shows how important are rigorous minimality results in these cases.

\medskip

Furthermore, from a Physics point of view, it is well-known (see e.g. \cite{Tilley}) that point defects play an important role in crystal properties. As explained in \cite{Allaire}: `Crystals are like people, it is the defects in them which tend to make them interesting'. For instance, they reduce the electric and thermal conductivity in metals and modify the colors of solids and their mechanical strength. We also notice that substitutional defects control the electronic conductivity in semi-conductors, whereas the vacancies control the diffusion and the ionic conductivity in a solid. In particular, there is no perfect crystal in nature and it is then interesting and physically relevant to study optimality results for periodic systems with defects, in particular for models at positive temperature where the number of vacancies per unit volume increases exponentially with the temperature (see e.g. \cite[Sec. 3.4.3]{Tilley}). Notice that the raise of temperature also creates another kind of defects called self-interstitial -- i.e. the presence of extra atoms out of lattice sites -- but they are known to be negligible (at least if they are of the same type than the solid's atoms) compared to the vacancies when disorder appears, excepted for Silicon.

\medskip

\textbf{Plan of the paper.} Our main results are presented in Section \ref{subsec:main} whereas their proofs are postponed to Section \ref{sec:proof}. Many applications of our results are discussed in Section \ref{subsec:examples}, including explicit examples of minimality results for the Kagome lattice and other ionic structures.

\section{Statement of the main results}\label{subsec:main}

\subsection{On the minimality of a universal optimizer}

We start by recalling the notion of universal optimality among lattices as defined by Cohn and Kumar in \cite{CohnKumar}.

\begin{defi}[\textbf{Universal optimality among lattices}]
Let $d\geq 2$. We say that $L_d$ is universally optimal in $\mathcal{L}_d(1)$ if $L_d$ is a minimizer of $E_f$ defined by \eqref{eq:defEF} in $\mathcal{L}_d(1)$ for any $f\in \mathcal{F}_d^{cm}$.
\end{defi}

\begin{remark}[Universally optimal lattices]
We recall again that the only known universally optimal lattices in dimension $d\geq 2$ are $\mathsf{A}_2$ (see \cite{Mont}), $\mathsf{E}_8$ and the Leech lattice $\Lambda_{24}$ (see \cite{CKMRV2Theta}) in dimensions $d\in \{2,8,24\}$. It is also shown in \cite[p. 117]{SarStromb} that there is no such universally optimal lattice in dimension $d=3$. There are also clear indications (see \cite[Sect. 6.1]{OptinonCM}) that the space of functions for which the minimality at all the scales of $L_d$ holds is much larger than $\mathcal{F}_d^{cm}$.
\end{remark}

Before stating our results, notice that all of them are stated in terms of global optimality but could be rephrased for showing local optimality properties. This is important, in particular in dimensions $d=3$ where only local minimality results are available for $E_f$ (see e.g. \cite{Beterminlocal3d}) and can be generalized for energies of type $E_f^\kappa$, ensuring the local stability of certain crystal structures.

\medskip

We now show that the universal optimalities among lattices in dimension $d\in \{2,8,24\}$ proved in \cite{Mont,CKMRV2Theta} are not conserved in the non-shifted case once we only removed a single integer sublattice a positive number $a_k>0$ of times, whereas they are conserved when $a_k<0$.

\begin{thm}[\textbf{Conservation of universal optimalities - Non-shifted case}]\label{thm0}
Let $f$ be defined by $f(r)=e^{-\pi \alpha r}$, $\alpha>0$. For all $d\in \{2,8,24\}$, all $k\in \N\backslash \{1\}$, all $a_k>0$ and $\kappa=\{k, a_k, \emptyset\}$, there exists $\alpha_d$ such that for all $\alpha \in (0,\alpha_d)$, $\mathsf{A}_2$, $\mathsf{E}_8$ and the Leech lattice $\Lambda_{24}$ are not minimizers of $E_f^{\kappa}$ in $\mathcal{L}_d(1)$. \\
Furthermore, for any $d\in \{2,8,24\}$, for any $K\subset \N\backslash \{1\}$, any $A_K=\{a_k\}_{k\in K}\subset \R_-$ and $\kappa=\{K, A_K,\emptyset\}$, $\mathsf{A}_2$, $\mathsf{E}_8$ and the Leech lattice $\Lambda_{24}$ are the unique minimizers of $E_f^\kappa$ in $\mathcal{L}_d(1)$ for all $\alpha>0$.
\end{thm}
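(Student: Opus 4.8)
\medskip

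The plan is to reduce both assertions to the non-shifted reduction \eqref{eq:fAK}: since $P_K=\emptyset$ and $f(r)=e^{-\pi\alpha r}$ we have $E_f^\kappa[L]=E_{f_\kappa}[L]$ with $f_\kappa(r)=e^{-\pi\alpha r}-\sum_{k\in K}a_k e^{-\pi\alpha k^2 r}$, i.e., using $E_f[L]=\theta_L(\alpha)-1$,
\begin{equation}\label{eq:Ekappa-theta}
E_f^\kappa[L]=\bigl(\theta_L(\alpha)-1\bigr)-\sum_{k\in K}a_k\bigl(\theta_L(\alpha k^2)-1\bigr),
\end{equation}
the series converging absolutely by the standing hypothesis on $\kappa$. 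For the second assertion ($A_K\subset\R_-$) put $b_k:=-a_k>0$, so that \eqref{eq:Ekappa-theta} becomes $E_f^\kappa[L]=\bigl(\theta_L(\alpha)-1\bigr)+\sum_{k\in K}b_k\bigl(\theta_L(\alpha k^2)-1\bigr)$: a superposition, with the \emph{positive} weights $1$ and $b_k$, of the functionals $L\mapsto\theta_L(\beta)-1$ (equivalently $f_\kappa(r)=\int_0^\infty e^{-rt}d\mu_{f_\kappa}(t)$ with $\mu_{f_\kappa}=\delta_{\pi\alpha}+\sum_{k}b_k\delta_{\pi\alpha k^2}\geq 0$, whence $f_\kappa\in\mathcal{F}^{cm}_d$). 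By Montgomery's theorem \cite{Mont} for $d=2$ and the lattice theta part of \cite{CKMRV2Theta} for $d\in\{8,24\}$, each such functional is \emph{strictly} minimized over $\mathcal{L}_d(1)$ at $L_d\in\{\mathsf{A}_2,\mathsf{E}_8,\Lambda_{24}\}$; since the weight in front of $\theta_L(\alpha)-1$ is strictly positive, $E_f^\kappa$ is strictly minimized at $L_d$, and this holds for every $\alpha>0$ — which is the second assertion.

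\medskip

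For the first assertion ($K=\{k\}$, $a_k>0$) the corresponding measure $\delta_{\pi\alpha}-a_k\delta_{\pi\alpha k^2}$ has a negative atom and the above breaks down; I would instead use the small-$\alpha$ asymptotics obtained from the Jacobi identity $\theta_L(\beta)=\beta^{-d/2}\theta_{L^*}(1/\beta)$ (valid because $L,L^*\in\mathcal{L}_d(1)$), namely $\theta_L(\beta)=\beta^{-d/2}\bigl(1+N_{L^*}e^{-\pi\lambda_1(L^*)^2/\beta}+O(e^{-\pi\mu/\beta})\bigr)$ as $\beta\to0^+$, where $\lambda_1(L^*)$ is the minimal norm of the dual $L^*$, $N_{L^*}$ its number of minimal vectors and $\mu>\lambda_1(L^*)^2$. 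Inserting $\beta=\alpha$ and $\beta=\alpha k^2$ into \eqref{eq:Ekappa-theta}, the $\alpha^{-d/2}$-terms combine into the lattice-independent quantity $\alpha^{-d/2}(1-a_k k^{-d})$, and the leading lattice-dependent contribution as $\alpha\to0^+$ is $-a_k k^{-d}\alpha^{-d/2}N_{L^*}e^{-\pi\lambda_1(L^*)^2/(k^2\alpha)}$: it dominates the competing term $\alpha^{-d/2}N_{L^*}e^{-\pi\lambda_1(L^*)^2/\alpha}$ (their ratio tends to $+\infty$ since $1-k^{-2}>0$) and is \emph{negative} because $a_k>0$. Thus, for small $\alpha$, minimizing $E_f^\kappa[L]$ pushes $\lambda_1(L^*)$ to be small — the opposite of what characterizes $L_d$. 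Since $\mathsf{A}_2$, $\mathsf{E}_8$ and $\Lambda_{24}$ are self-dual up to isometry and are the unique densest lattice packings in dimensions $2,8,24$, every $L_0\in\mathcal{L}_d(1)\backslash\{L_d\}$ satisfies $\lambda_1(L_0^*)<\lambda_1(L_d^*)=\lambda_1(L_d)$; fixing one such $L_0$ and comparing the expansions of $E_f^\kappa[L_0]$ and $E_f^\kappa[L_d]$ term by term gives $E_f^\kappa[L_0]<E_f^\kappa[L_d]$ for all $\alpha$ below a threshold $\alpha_d=\alpha_d(k,a_k)$, so $L_d$ is not a minimizer there.

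\medskip

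The second assertion is essentially immediate once one notices that the sign change $a_k<0$ turns $f_\kappa$ into a positive superposition of Gaussians, hence a completely monotone potential to which the known universal optimalities apply verbatim; the genuine work lies in the first assertion. There the main obstacle is to make the $\alpha\to0^+$ expansion of \eqref{eq:Ekappa-theta} quantitative — bounding the remainders in the Jacobi expansion well enough that the sign of the dominant exponential correction fixes the ordering of $E_f^\kappa[L_0]$ and $E_f^\kappa[L_d]$ \emph{uniformly} over a whole interval $(0,\alpha_d)$ rather than merely along a sequence $\alpha_n\to0$. A second, classical, input is the uniqueness of $\mathsf{A}_2$, $\mathsf{E}_8$, $\Lambda_{24}$ as densest lattice packings, which provides the strictly better competitor $L_0$.
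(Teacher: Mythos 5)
Your treatment of the second assertion is exactly the paper's: for $a_k<0$ the measure of $f_\kappa$ is a nonnegative combination of Dirac masses, so $f_\kappa\in\mathcal{F}_d^{cm}$ and the universal optimality results of \cite{Mont,CKMRV2Theta} apply term by term.

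For the first assertion your argument is correct but follows a genuinely different route. The paper also reduces to the sign of $\bigl[\theta_L(\alpha)-\theta_\Lambda(\alpha)\bigr]-a_k\bigl[\theta_L(k^2\alpha)-\theta_\Lambda(k^2\alpha)\bigr]$, but it works \emph{locally}: using the Coulangeon--Sch\"urmann Taylor expansion of $L\mapsto\theta_L(\alpha)$ at $\Lambda$ \cite{Coulangeon:2010uq} (which exploits the $4$-design structure of the shells of $\mathsf{A}_2,\mathsf{E}_8,\Lambda_{24}$), it shows that the ratio of the two brackets, in the limit $L\to\Lambda$, tends to $0$ as $\alpha\to0$, so nearby lattices beat $\Lambda$ for small $\alpha$. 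You instead work \emph{globally}: via the Jacobi identity you fix one competitor $L_0$ whose dual has a strictly shorter minimal vector, and observe that $D(\alpha):=\theta_{L_0}(\alpha)-\theta_{\Lambda}(\alpha)>0$ satisfies $D(\alpha)/D(k^2\alpha)\to0$ exponentially as $\alpha\to0^+$, whence $D(\alpha)-a_kD(k^2\alpha)<0$ below some $\alpha_d$. This is sound (for a \emph{fixed} pair $L_0,\Lambda$ the remainder bounds in the theta expansion you worry about are routine, and the threshold may depend on $L_0$, which the statement allows), and it buys independence from the $4$-design machinery; conversely, the paper's local argument generalizes to any universal optimizer with $4$-design shells, as remarked after the theorem. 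One simplification of your route: you do not need the uniqueness of $\mathsf{A}_2,\mathsf{E}_8,\Lambda_{24}$ as densest lattice packings (a deep input in $d=24$), nor the claim about \emph{every} $L_0\neq L_d$; a single explicit competitor suffices, e.g.\ a suitably elongated lattice of unit covolume, whose dual visibly has $\lambda_1(L_0^*)<\lambda_1(L_d)$. Also keep in mind that equalities such as $L_d^*=L_d$ and the uniqueness statements hold only up to rotation, which is harmless since the paper works modulo rotations throughout.
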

\begin{remark}[Generalization to $4$-designs]
The non-optimality result in Theorem \ref{thm0} is obtained by using the Taylor expansion of the theta function found by Coulangeon and Sch\"urmann in \cite[Eq. (21)]{Coulangeon:2010uq}. Therefore, the result is actually generalizable to any universal optimal lattice $L_d$ such that all its layers (or shells) are $4$-designs, i.e. such that for all $r>0$ with $\{\partial B_r\cap L_d  \}\neq \emptyset$, $B_r$ being the ball centred at the origin and with radius $r$, and all polynomial $P$ of degree up to $4$ we have
$$
\frac{1}{|\partial B_r|}\int_{\partial B_r} P(x)dx=\frac{1}{\sharp\{\partial B_r\cap L_d  \}}\sum_{x\in \partial B_r\cap L_d} P(x).
$$
\end{remark}

We now present a sufficient condition on $P_K$ such that the triangular lattice is universally optimal in $\mathcal{L}_2(1)$ for $E_f^\kappa$. This result is based on our recent work \cite{MaxTheta1} where we have proven the maximality of $\mathsf{A}_2$ in $\mathcal{L}_2(1)$ for the centred lattice theta functions, i.e. $L\mapsto \theta_{L+c_L}(\alpha)$, where $c_L$ is the center of the unit cell $Q_L$ (see also Remark \ref{rmk:highalt}).

\begin{thm}[\textbf{Conservation of universal optimality - 2d shifted case}]\label{thm02}
Let $d=2$ and $\kappa=\{K, A_K, P_K\}$ be as in \eqref{eq:kappa} where $A_K\subset \R_+$, and be such that
\begin{equation}\label{eq:shift}
\forall k\in K, \forall i\in I_k, \quad \frac{p_{i,k}}{k}=c_L \textnormal{ modulo $Q_L$,}\quad L=\Z u_1 \oplus \Z u_2,\quad c_L:=\frac{u_1+u_2}{2},
\end{equation}
where $Q_L$ is the unit cell of $L$ defined by \eqref{eq:QL} with a Minkowski basis $\{u_1,u_2\}$ and its center $c_L$. Then, for all $f\in \mathcal{F}_2^{cm}$, $\mathsf{A}_2$ is the unique minimizer of $E^\kappa_f$ in $\mathcal{L}_2(1)$.
\end{thm}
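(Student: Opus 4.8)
The plan is to rewrite $E_f^\kappa$ on $\mathcal{L}_2(1)$ as a positive combination of one energy that is \emph{minimized} by $\mathsf{A}_2$ together with several energies that are \emph{maximized} by $\mathsf{A}_2$, the latter entered with negative coefficients, so that every term points in the same (minimizing) direction.

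First I would unravel the shifted defect terms. Fix $L=\Z u_1\oplus \Z u_2$ with a Minkowski basis and center $c_L=(u_1+u_2)/2$. Hypothesis \eqref{eq:shift} states $p_{i,k}/k-c_L\in L$, hence $p_{i,k}+kL=k(c_L+L)$ for every $i\in I_k$; moreover $0\notin k(c_L+L)$ because $c_L\notin L$, so no point is deleted from this translate and
$$
E_f[p_{i,k}+kL]=\sum_{q\in c_L+L}f\big(k^2|q|^2\big)=\sum_{p\in L}f_k\big(|p+c_L|^2\big)=E_{f_k}^c[L],\qquad f_k(r):=f(k^2 r),
$$
with $E^c$ as in \eqref{eq:defaltercenter}. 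Consequently, setting $b_k:=a_k\,\sharp I_k>0$ (recall $A_K\subset\R_+$),
$$
E_f^\kappa[L]=E_f[L]-\sum_{k\in K}b_k\,E_{f_k}^c[L],
$$
the $K$-sum being absolutely convergent by the hypothesis on $\kappa$.

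Next I would note that $f\in\mathcal{F}_2^{cm}$ forces $f_k\in\mathcal{F}_2^{cm}$: if $f(r)=\int_0^\infty e^{-rt}\,d\mu_f(t)$ with $\mu_f\ge 0$, then $f_k(r)=\int_0^\infty e^{-rt}\,d\tilde\mu_k(t)$ where $\tilde\mu_k\ge 0$ is the pushforward of $\mu_f$ under $t\mapsto k^2 t$, and $|f_k(r)|=O(r^{-p_f})$ with the same exponent $p_f>1$. Two universal results then apply directly. Montgomery's theorem \cite{Mont} gives $E_f[L]\ge E_f[\mathsf{A}_2]$ with equality only for $L=\mathsf{A}_2$. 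The maximality of $\mathsf{A}_2$ for the centred lattice theta function established in \cite{MaxTheta1}, combined with the Bernstein superposition $E_{f_k}^c[L]=\int_0^\infty \theta_{L+c_L}(t/\pi)\,d\tilde\mu_k(t)$ (the interchange of $\sum_{p\in L}$ and $\int$ being licit by Tonelli, the summand $e^{-t|p+c_L|^2}$ being nonnegative), yields $E_{f_k}^c[L]\le E_{f_k}^c[\mathsf{A}_2]$ with equality only for $L=\mathsf{A}_2$, hence $-b_kE_{f_k}^c[L]\ge -b_kE_{f_k}^c[\mathsf{A}_2]$. Adding up over $k\in K$ together with the first term gives $E_f^\kappa[L]\ge E_f^\kappa[\mathsf{A}_2]$; since already the $E_f$-term is strictly larger for $L\ne\mathsf{A}_2$ (for nonzero $\mu_f$), $\mathsf{A}_2$ is the unique minimizer of $E_f^\kappa$ in $\mathcal{L}_2(1)$.

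The argument is thus a short superposition on top of the two prior universal (optimality/maximality) theorems, and I do not expect any genuine obstacle. The only points needing care are bookkeeping ones: identifying $p_{i,k}+kL$ with $k(c_L+L)$ and checking that the $c_L$ in \eqref{eq:shift} is the same Minkowski-basis center used in \eqref{eq:defthetaaltcent}; observing that $0\notin k(c_L+L)$ so that $E_f[p_{i,k}+kL]$ is a full centred sum with no excluded point; and justifying the Tonelli interchange together with the finiteness of $E_f[\mathsf{A}_2]$, of $E_{f_k}^c[\mathsf{A}_2]$, and of the $K$-sum from the decay assumption $p_f>1$. The hypothesis $A_K\subset\R_+$ is precisely what makes the subtracted, $\mathsf{A}_2$-maximized terms contribute in the minimizing direction — had some $a_k$ been negative, a competing effect would arise, exactly as in the non-shifted Theorem \ref{thm0}.
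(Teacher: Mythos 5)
Your proposal is correct and follows essentially the same route as the paper: you rewrite each shifted term $E_f[p_{i,k}+kL]$ as a centred energy $E_{f(k^2\cdot)}[L+c_L]$, then combine Montgomery's universal optimality of $\mathsf{A}_2$ for $E_f$ with the universal maximality of $\mathsf{A}_2$ for the centred energy from \cite{MaxTheta1}, using $A_K\subset\R_+$ so both contributions favour $\mathsf{A}_2$. Your extra bookkeeping (that $0\notin k(c_L+L)$, that $f(k^2\cdot)\in\mathcal{F}_2^{cm}$, and the Tonelli interchange) only makes explicit what the paper leaves implicit.
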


\begin{example}
Theorem \ref{thm02} holds in a particularly simple case, when $k=2$ and $p_{i,2}=u_1+u_2\in L$.
\end{example}

\begin{remark}[Conjecture in dimensions $d\in \{8,24\}$]\label{rmk:conjoptalt}
Theorem \ref{thm02} is based on the fact that $\mathsf{A}_2$ has been shown to be the unique maximizer of $E_f^c$ defined in \eqref{eq:defaltercenter} in $\mathcal{L}_d(1)$ for any $f\in \mathcal{F}_d^{cm}$ (see also Remark \ref{rmk:highalt}). As discussed in \cite{MaxTheta1}, we believe that this result still holds in dimensions 8 and 24 for $\mathsf{E}_8$ and the Leech lattice $\Lambda_{24}$, as well as our Theorem \ref{thm02}.
\end{remark}

\begin{remark}[Phase transition for the minimizer in the Gaussian case - Numerical observation]
In the non-universally optimal case of Theorem \ref{thm0} and the shifted case satisfying \eqref{eq:shift}, numerical investigations suggest that the minimizer of $E_f^\kappa$ exhibits a phase transition as the density decreases.\\
\textit{Non-shifted case.} Let us consider the example $f(r)=e^{-\pi \alpha r}$ given in Theorem \ref{thm0} (i.e. $f(r^2)$ is a Gaussian function) and $f_{\kappa}(r)=e^{-\pi \alpha r}-0.1 e^{-2\pi \alpha r}$ (defined by \eqref{eq:fAK}), $\kappa:=\{2, 0.1,\emptyset\}$, corresponding to removing $a_2=0.1$ times the sublattice $2L$ ($k=2$) from the original lattice $L$. In dimension $d=2$, we numerically observe an interesting phase transition of type `triangular-rhombic-square-rectangular' for the minimizer of $E_f^{\kappa}$ in $\mathcal{L}_2(1)$ as $\alpha$ (which plays the role of the inverse density here) increases.\\
\textit{Shifted case with $a_k<0$.} Let us assume that $K=\{2\}$, $A_K:=\{a_2<0\}$, $I_2=\{1\}$ and $p_{1,2}=u_1+u_2$ in such a way that \eqref{eq:shift} is satisfied. If we consider $f(r)=e^{-\pi\alpha r}$, then for all the negative parameters $a_2$ we have chosen, the minimizer of $E_f^\kappa[L]:=\theta_L(\alpha)+|a_2| \theta_{L+c_L}(\alpha)$ in $\mathcal{L}_2(1)$ numerically shows the same phase transition of type `triangular-rhombic-square-rectangular' as $\alpha$ increases.\\
 This type of phase transition seems to have a certain universality in dimension 2 since it was also observed for Lennard-Jones energy \cite{Beterloc}, Morse energy \cite{LBMorse}, Madelung-like energies \cite{BFK20} and proved for 3-blocks copolymers \cite{LuoChenWei} and two-component Bose-Einstein condensates \cite{LuoWeiBEC} by Wei et al..
\end{remark}

\begin{remark}[Optimality of $\Z^d$ in the orthorhombic case]\label{rmk:ortho}
Another type of universal optimality is known in the set of orthorhombic lattices, i.e. the lattice $L$ which can be represented by an orthogonal basis. As proved by Montgomery in \cite[Thm. 2]{Mont}, the cubic lattice $\Z^d$ is universally minimal among orthorhombic lattices of unit density in any dimension (see also \cite[Rmk. 3.1]{BFK20}). The proof of Theorem \ref{thm0} can be easily adapted to show the same optimality result for $\Z^d$ among orthorhombic lattices of unit density. Furthermore, it has also been shown (see e.g. \cite[Prop. 1.4]{BeterminPetrache}) that $\Z^d$ is the unique maximum of $L\mapsto E_f[L+c_L]$ among orthorhombic lattices of fixed density for any $f\in \mathcal{F}_d^{cm}$. Therefore, the proof of Theorem \ref{thm02} can be also easily adapted in this orthorhombic case in order to show the universal optimality of $\Z^d$ in this particular shifted case. Moreover, all the next results involving any universally optimal lattice can be rephrased for $\Z^d$ in the space of orthorhombic lattices. Examples of applications of such result will be discussed in Section \ref{subsec:ions}.
\end{remark}


We now give a general criterion that ensures the conservation of an universal optimizer's minimality for $E_f^{\kappa}$.

\begin{thm}[\textbf{General criterion for minimality conservation - Non-shifted case}]\label{thm1}
Let $d\geq 2$, $\kappa=\{K, A_K,\emptyset\}$ be as in \eqref{eq:kappa} (possibly empty) where $A_K\subset \R_+$, and $L_d$ be universally optimal in $\mathcal{L}_d(1)$. Furthermore, let $f\in \mathcal{F}_d$ be such that $d\mu_f(t)=\rho_f(t)dt$ and $f_{\kappa}$ be defined by \eqref{eq:fAK}. Then:
\begin{enumerate}
\item For any $\kappa$, we have $f_\kappa(r)=\displaystyle \int_0^\infty e^{-rt}d\mu_{f_\kappa}(t)$ where 
$$
d\mu_{f_\kappa}(t)=\rho_{f_\kappa}(t)dt,\quad \rho_{f_\kappa}(t)=\rho_f(t)-\sum_{k\in K} \frac{a_k}{k^2}\rho_f\left( \frac{t}{k^2}\right).
$$
\item The following equivalence holds: $f_{\kappa}\in \mathcal{F}_d^{cm}$ if and only if 
\begin{equation}\label{eq:condthm}
\forall t>0,\quad \rho_f(t)\geq \sum_{k \in K}\frac{a_k}{k^2}\rho_f\left( \frac{t}{k^2} \right);
\end{equation}
\item If \eqref{eq:condthm} holds, then $L_d$ is the unique minimizer of $E_f^{\kappa}$ in $\mathcal{L}_d(1)$.
\item If there exists $V>0$ such that for a.e. $y\ge 1$ there holds 
\begin{equation}\label{eq:condgV}
g_V(y):=\rho_{f_\kappa}\left(  \frac{\pi y}{V^{\frac{2}{d}}}\right)+y^{\frac{d}{2}-2}\rho_{f_\kappa}\left( \frac{\pi}{V^{\frac{2}{d}} y} \right)\geq 0,
\end{equation}
then $V^{\frac{1}{d}}L_d$ is the unique minimizer of $E_f^\kappa$ in $\mathcal{L}_d(V)$.
\end{enumerate}
\end{thm}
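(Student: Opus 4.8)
The plan is to dispatch assertions (1)--(3) by bookkeeping with the Laplace representation and to prove (4) by a scaling reduction followed by a Jacobi transformation of the theta function. For (1) I would substitute $f(r)=\int_0^\infty e^{-rt}\rho_f(t)\,dt$ into \eqref{eq:fAK} and, for each $k\in K$, change variables $t\mapsto k^2t$ in $f(k^2r)=\int_0^\infty e^{-k^2rt}\rho_f(t)\,dt$; absolute convergence of the sum over $K$ (and the $O(r^{-p_f})$ bound, inherited by $f_\kappa$ because each $f(k^2r)$ decays at least as fast as $f$) allows exchanging $\sum_{k\in K}$ with $\int_0^\infty$ and reading off $\rho_{f_\kappa}$. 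Part (2) is then immediate from uniqueness of the representation $f_\kappa(r)=\int_0^\infty e^{-rt}\,d\mu_{f_\kappa}(t)$ (injectivity of the Laplace transform, equivalently Hausdorff--Bernstein--Widder): $f_\kappa\in\mathcal F_d^{cm}$ iff $\mu_{f_\kappa}\ge0$ iff $\rho_{f_\kappa}\ge0$ a.e., which is exactly \eqref{eq:condthm}. For (3), \eqref{eq:condthm} gives $f_\kappa\in\mathcal F_d^{cm}$ by (2), while $P_K=\emptyset$ gives $E_f^\kappa=E_{f_\kappa}$ by \eqref{eq:fAK}, so universal optimality of $L_d$ applies verbatim; for \emph{uniqueness} I would write $E_{f_\kappa}[L]=\int_0^\infty(\theta_L(t/\pi)-1)\rho_{f_\kappa}(t)\,dt$ (Fubini, legitimate by the decay) and use that $L_d$ is the \emph{unique} minimizer in $\mathcal L_d(1)$ of each Gaussian energy $L\mapsto\theta_L(\alpha)$ (part of the content of Montgomery's theorem and of Cohn--Kumar--Miller--Radchenko--Viazovska for the known $L_d$), so the superposition weighted by $\rho_{f_\kappa}\ge0$ keeps $L_d$ as unique minimizer as soon as $f_\kappa\not\equiv0$.

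For (4), writing $L=V^{1/d}L'$ with $L'\in\mathcal L_d(1)$ gives $E_f^\kappa[L]=E_{f_\kappa}[L]=E_g[L']$ with $g(r):=f_\kappa(V^{2/d}r)$, of Laplace density $\rho_g(t)=V^{-2/d}\rho_{f_\kappa}(V^{-2/d}t)$, so it suffices to show that $L_d$ uniquely minimizes $E_g$ on $\mathcal L_d(1)$. Starting from $E_g[L']=\pi\int_0^\infty(\theta_{L'}(\alpha)-1)\rho_g(\pi\alpha)\,d\alpha$, I would split at $\alpha=1$ and transform the piece over $(0,1)$ by the Jacobi identity $\theta_{L'}(\alpha)=\alpha^{-d/2}\theta_{(L')^*}(1/\alpha)$ (valid since $|\det L'|=1$) composed with $\alpha\mapsto1/\alpha$. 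This produces an identity
\[
E_g[L']=E_{h_1}[L']+E_{h_2}[(L')^*]+c(g),
\]
in which $c(g)$ does not depend on $L'$ and $h_1,h_2\in\mathcal F_d$ have Laplace densities supported on $(\pi,\infty)$, with $\rho_{h_1}(t)=\rho_g(t)\mathbf{1}_{\{t>\pi\}}$ and $\rho_{h_2}(t)\propto t^{d/2-2}\rho_g(\pi^2/t)\mathbf{1}_{\{t>\pi\}}$. A short computation then shows that for $t>\pi$ one has $\rho_{h_1}(t)+\rho_{h_2}(t)=V^{-2/d}g_V(t/\pi)$, so the hypothesis ``$g_V\ge0$ a.e.\ on $[1,\infty)$'' is exactly ``$h_1+h_2\in\mathcal F_d^{cm}$''.

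The last step is to pass from $h_1+h_2\in\mathcal F_d^{cm}$ to minimality of $L_d$. Since the known universal optimizers $\mathsf A_2,\mathsf E_8,\Lambda_{24}$ are self-dual up to isometry, $(L_d)^*\cong L_d$, so the displayed identity at $L'=L_d$ reads $E_g[L_d]=E_{h_1+h_2}[L_d]+c(g)$. In dimension $2$ everything closes immediately: every unit-covolume lattice is isometric to its dual, whence $E_{h_2}[(L')^*]=E_{h_2}[L']$ and $E_g[L']=E_{h_1+h_2}[L']+c(g)$, and $h_1+h_2\in\mathcal F_2^{cm}$ together with the universal optimality and uniqueness of $\mathsf A_2$ finishes the proof, the minimum being strict unless $g_V$ vanishes a.e.\ on $[1,\infty)$. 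I expect the \textbf{main difficulty} to be the higher-dimensional case, where $L'\not\cong(L')^*$: one must then estimate $E_{h_1}[L']+E_{h_2}[(L')^*]$ from below using $\theta_{L'}(\alpha)\ge\theta_{L_d}(\alpha)$ and $\theta_{(L')^*}(\alpha)\ge\theta_{L_d}(\alpha)$ (both holding for all $\alpha>0$, the Gaussian being completely monotone) with only the sign information $\rho_{h_1}+\rho_{h_2}\ge0$ at hand; the scale-by-scale integrand of $E_{h_1}[L']+E_{h_2}[(L')^*]-E_{h_1+h_2}[L_d]$ is not manifestly nonnegative when $\rho_{h_1}$ and $\rho_{h_2}$ have opposite signs, and handling this is the delicate point — it disappears in $d=2$ by the self-similarity of every lattice to its dual, and in any dimension it is harmless for competitors near $L_d$, so the result can at least always be stated locally.
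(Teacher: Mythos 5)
Your treatment of parts (1)--(3) matches the paper's: the change of variables $t\mapsto k^2t$ in the Laplace representation gives $\rho_{f_\kappa}(t)=\rho_f(t)-\sum_{k\in K}a_kk^{-2}\rho_f(t/k^2)$, part (2) is injectivity of the Laplace transform, and part (3) follows from $E_f^\kappa=E_{f_\kappa}$ together with universal optimality (your explicit uniqueness argument via the Gaussian superposition is a welcome elaboration of what the paper leaves implicit). For part (4) you also set up exactly the paper's computation: scaling to unit covolume, splitting the integral at $1$, applying the Jacobi identity $\theta_L(1/y)=y^{d/2}\theta_{L^*}(y)$, and identifying $\rho_{h_1}+\rho_{h_2}$ with a positive multiple of $g_V$; your two-dimensional conclusion via $L^*\cong L$ is correct and complete.

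The gap is that you stop short in dimension $d\geq 3$, and the device you are missing is the paper's symmetrization over duality. Rather than bounding $E_g[L']-E_g[L_d]$ directly (where, as you correctly observe, $\rho_{h_1}$ and $\rho_{h_2}$ may separately change sign), one writes the decomposition identity for $L'$ \emph{and} for $(L')^*$ and adds them, using $((L')^*)^*=L'$ and $L_d^*=L_d$ (the latter forced by universal optimality combined with Jacobi). The terms regroup so that each of the two deficits $\theta_{L'}(y)-\theta_{L_d}(y)$ and $\theta_{(L')^*}(y)-\theta_{L_d}(y)$ -- both nonnegative by universal optimality -- is multiplied by the \emph{full} combination $g_V(y)\geq 0$ instead of by the individual densities:
\[
E_g[L']+E_g[(L')^*]-2E_g[L_d]=\frac{\pi}{V^{2/d}}\int_1^\infty\left(\left[\theta_{L'}(y)-\theta_{L_d}(y)\right]+\left[\theta_{(L')^*}(y)-\theta_{L_d}(y)\right]\right)g_V(y)\,dy\;\geq\;0,
\]
with equality if and only if $L'=L_d$. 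This is how the paper concludes in every dimension. Be aware that what this symmetrization delivers directly is that $L_d$ uniquely minimizes the dual-averaged energy $L'\mapsto\frac12\left(E_g[L']+E_g[(L')^*]\right)$; in $d=2$ this coincides with your statement because $E_g[(L')^*]=E_g[L']$, whereas for $d\geq 3$ a further word is needed to pass from the averaged inequality to minimality of $E_g$ itself. So your instinct that the higher-dimensional case is the delicate point is sound, but the symmetrized inequality is the key step you should have produced, since it is the paper's answer to precisely the difficulty you isolate.
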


The fourth point on Theorem \ref{thm1} generalizes our two-dimensional result  \cite[Thm. 1.1]{BetTheta15} to any dimension and with possible periodic arrays of defects. It is an important result since only few minimality results for $E_f$ are available for non-completely monotone potentials $f\in \mathcal{F}_d\backslash \mathcal{F}_d^{cm}$, and this also the first result of this kind for charged lattices (i.e. when the particles are not of the same kind). Condition \eqref{eq:condgV} has been used in dimension $d=2$ in \cite{BetTheta15,LBMorse} for proving the optimality of a triangular lattice at fixed density for non-convex sums of inverse power laws, differences of Yukawa potentials, Lennard-Jones potentials and Morse potentials and we expect the same property to hold in higher dimension. In Theorem \ref{thm3LJ}, we will give an example of such application in any dimension $d$ by applying the fourth point of Theorem \ref{thm1} to Lennard-Jones type potentials. We now add a very important remark concerning the adaptation of the fourth point of Theorem \ref{thm1} in the general periodic case, i.e. for crystallographic point packings (see \cite[Def. 2.5]{Aperiodic}).

\begin{remark}[\textbf{Crystallization at fixed density as a consequence of Cohn-Kumar Conjecture}]\label{rmk:CK}
When $\kappa=\emptyset$, i.e. all the particles are present and of the same kind, the proof of point 4. of Theorem \ref{thm1} admits a straightforward adaptation in the periodic case, i.e among all configurations $\mathcal{C}=\bigcup_{i=1}^N \left(\Lambda+v_k\right)\in \mathcal{S}$ being $\Lambda$-periodic of unit density, where $\Lambda\in \mathcal{L}_d$, i.e. such that $|\Lambda|=N$, and with a $f$-energy defined for $V>0$ by
$$
E_f[V^{\frac{1}{d}}\mathcal{C}]:= \frac{1}{N}\sum_{j,k=1}^N \sum_{x\in \Lambda \backslash \{v_k-v_j\}} f\left(V^{\frac{2}{d}} |x+v_k-v_j|^2\right).
$$
Using again the representation of $f$ as a superposition of Gaussians combined with the Jacobi transformation formula (see the proof of Theorem \ref{thm1}), the same condition \eqref{eq:condgV} ensures the crystallization on $L_d$ at fixed density once we know its universal optimality in the set of all periodic configurations with fixed density $V^{-1}$. This result is in the same spirit as the one derived by Petrache and Serfaty in \cite{PetracheSerfatyCryst} for Coulomb and Riesz interactions. In dimensions $d\in \{8,24\}$, \eqref{eq:condgV} implies the crystallization on $\mathsf{E}_8$ and $\Lambda_{24}$ at fixed density $V^{-1}$ as a consequence of \cite{CKMRV2Theta} whereas in dimension $d=2$ it is conjectured by Cohn and Kumar in \cite{CohnKumar} that the same holds on the triangular lattice. It is in particular true for the Lennard-Jones potential at high density as a simple application of our Theorem \ref{thm3LJ}.
\end{remark}

Using exactly the same arguments as the fourth point of Theorem \ref{thm1}, we show the following result which gives a sufficient condition on an interaction potential $f$ for a universal maximizer $L_d^\pm$ of $\theta_L^\pm(\alpha)$ to be optimal for $E_f^\pm$, where
\begin{equation}\label{eq:thetaEfalt}
\theta_L^\pm(\alpha):=\sum_{p\in L} \varphi_\pm(p) e^{-\pi \alpha |p|^2},\quad \textnormal{and}\quad E_f^\pm[L]:=\sum_{p\in L\backslash \{0\}} \varphi_\pm(p) f(|p|^2),
\end{equation}
with $L=\bigoplus_{i=1}^d \Z u_i$, $\{u_1,...,u_d\}$ being its Minkowski basis, and $\varphi_\pm(p)=\sum_{i=1}^d m_i$ for $p=\sum_{i=1}^d m_i u_i$, $m_i\in \Z$ for all $i$. Remark that $E_f^\pm=E_f^\kappa$ when $\kappa=\{2, \{2,....2\},\{u_1,...,u_d\}\}$, $L=\bigoplus_{i=1}^d \Z u_i$. In particular, it holds for the triangular lattice $\mathsf{A}_2$ as a simple application of our main result in \cite{MaxTheta1}.

\begin{thm}[\textbf{Maximality of a universal maximizer for $\theta_L^\pm$ - Shifted case}]\label{thm02max}
Let $d\geq 2$, $V>0$, $\kappa=\{2, \{2,....2\},\{u_1,...,u_d\}\}$, where a generic lattice is written $L=\bigoplus_{i=1}^d \Z u_i$, $\{u_1,...,u_d\}$ being its Minkowski basis, and $L_d^\pm$ be the unique maximizer of $\theta_L^\pm(\alpha)$, defined by \eqref{eq:thetaEfalt}, in $\mathcal{L}_d(1)$ and for all $\alpha>0$. If $f\in \mathcal{F}_d$ satisfies \eqref{eq:condgV}, then $V^{\frac{1}{d}}L_d^\pm$ is the unique maximizer of $E_f^\kappa$ (equivalently of $E_f^\pm$ defined by \eqref{eq:thetaEfalt}) in $\mathcal{L}_d(V)$.
\end{thm}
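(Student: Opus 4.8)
The plan is to mimic exactly the proof of the fourth point of Theorem \ref{thm1}, replacing the universal optimality of $L_d$ by the universal maximality of $L_d^\pm$ for the alternating theta function $\theta_L^\pm$. First I would write $f\in\mathcal{F}_d$ as a superposition of Gaussians, $f(r)=\int_0^\infty e^{-rt}\,d\mu_f(t)$, so that
\begin{equation*}
E_f^\pm[V^{\frac1d}L]=\int_0^\infty\Bigl(\sum_{p\in L\setminus\{0\}}\varphi_\pm(p)\,e^{-V^{\frac2d}|p|^2 t}\Bigr)d\mu_f(t)=\int_0^\infty\Bigl(\theta_{V^{\frac1d}L}^\pm(t/\pi)\Bigr)d\mu_f(t),
\end{equation*}
using that $\varphi_\pm(0)=0$ so the $p=0$ term is harmless and the bracketed sum is, up to the normalization $\alpha=V^{\frac2d}t/\pi$, the alternating theta function. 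The point of switching to a Gaussian representation is that, unlike the completely monotone case, here $\mu_f$ need not be a positive measure, so one cannot simply integrate the pointwise maximality of $L_d^\pm$ against $d\mu_f$.

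The key step, exactly as in point 4 of Theorem \ref{thm1}, is to establish a functional equation for the alternating theta function analogous to the Jacobi transformation formula, expressing $\theta_{V^{\frac1d}L}^\pm(\alpha)$ in terms of $\theta_{V^{-\frac1d}L^*}^{(\cdot)}(1/\alpha)$ with an $\alpha^{d/2}$ prefactor. Indeed $\theta_L^\pm$ is (up to sign conventions) a theta function with a half-integer characteristic, so Poisson summation gives a transformation law relating the value at $\alpha$ to a value at $1/\alpha$ on the dual lattice, possibly exchanging the roles of $\theta^\pm$ and $\theta^c$ (the centered theta function). I would then split the integral over $t$ at the self-dual scale $t=\pi/V^{2/d}$ and apply this functional equation on one of the two halves to fold everything back onto values $\alpha\ge$ (self-dual scale). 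This produces, after the change of variables $y=V^{\frac2d}t/\pi$, an integrand of the form $g_V(y)\cdot(\text{an alternating/centered theta function evaluated at a single lattice-dependent scale})$, where $g_V$ is precisely the function in \eqref{eq:condgV} built from $\rho_{f_\kappa}$ — noting $E_f^\pm=E_f^\kappa$ with the specified $\kappa$, so $f_\kappa$ is the relevant density. Here one must be slightly careful that the duality may turn $\theta^\pm$ into $\theta^c$; since $\mathsf{A}_2$ (or the putative $L_d^\pm$) is simultaneously the universal maximizer of both $\theta^\pm$ and $\theta^c$ by \cite{MaxTheta1}, both terms point the same way.

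Granting $g_V(y)\ge 0$ for a.e.\ $y\ge 1$ — the hypothesis \eqref{eq:condgV} — and granting that for each admissible $y$ the map $L\mapsto\theta^\pm_{cL}(y)$ (resp.\ $\theta^c$) attains its strict maximum at $L_d^\pm$ in $\mathcal{L}_d(1)$, the integrand is pointwise bounded above by its value at $L_d^\pm$, with equality only at $L_d^\pm$ up to the a.e.-null set. Integrating over $y\ge 1$ (the null set contributes nothing) yields $E_f^\kappa[V^{\frac1d}L]\le E_f^\kappa[V^{\frac1d}L_d^\pm]$ with equality iff $L=L_d^\pm$, which is the claim after rescaling. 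The main obstacle I expect is purely technical bookkeeping around the functional equation: getting the transformation law for $\theta_L^\pm$ with the correct prefactor $y^{d/2}$ and correctly identifying which of $\theta^\pm,\theta^c$ appears after folding, together with checking that the self-dual splitting reproduces exactly the $\rho_{f_\kappa}$ combination in \eqref{eq:condgV} rather than some variant; once those identifications are pinned down, the rest is a verbatim repetition of the Theorem \ref{thm1} argument with "minimizer/$\mathcal{F}_d^{cm}$/minimum" replaced by "maximizer/$\mathcal{F}_d$/maximum."
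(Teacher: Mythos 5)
Your proposal is correct and follows essentially the same route as the paper: the paper's proof likewise repeats the argument of point 4 of Theorem \ref{thm1}, replacing the Jacobi identity \eqref{jacobi} by the transformation law $\theta_L^\pm(1/y)=y^{d/2}\theta_{L^*+c_{L^*}}(y)$ and invoking precisely the two maximality results from \cite{MaxTheta1} (for $\theta_L^\pm$ and for the centered theta function $\theta_{L+c_L}$) that you identified as the needed inputs. The only point to keep straight is that, as in Theorem \ref{thm1}, the factor $g_V$ emerges only after adding the contributions of $L$ and $L^*$ (i.e.\ one concludes $E_f^\pm[V^{1/d}L]+E_f^\pm[V^{1/d}L^*]\leq 2E_f^\pm[V^{1/d}L_d^\pm]$ via the pointwise bound $m_L^\pm(y)\leq 0$), not as a pointwise bound on a single folded integrand — but this is exactly the ``verbatim repetition'' you defer to.
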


\begin{remark}[Adaptation to shifted $f$-energy]\label{rmk:highalt}
We believe that Theorem \ref{thm02max} also holds for $\mathsf{E}_8$ and $\Lambda_{24}$ (see \cite[Conj. 1.3]{MaxTheta1} and Remark \ref{rmk:conjoptalt}). Furthermore, the same kind of optimality result could be easily derived for any energy shifted energy of type $L\mapsto E_f[L+c]$ where $c\in Q_L$ is fixed as a function of the vectors in the Minkowski basis $\{u_i\}$ and when one knows a universal minimizer or maximizer for $L\mapsto E_f[L+c]$, $f\in \mathcal{F}_d^{cm}$. However, no other result concerning any optimality of a lattice for such kind of energy is currently available when $c\not\in\{L, c_L\}$.
\end{remark}

The rest of our results are all given in the non-shifted case $P_K=\emptyset$. It is indeed a rather difficult task to minimize the sum of shifted and/or non-shifted energies of type $E_f$. Very few results are available and the recent work by Luo and Wei \cite{LuoWeiBEC} has shown the extreme difficulty to obtain any general result for completely monotone function $f$. Shifting the lattices by a non-lattice point which is not the center $c_L$ appears to be deeply more tricky in terms of energy optimization.

\medskip

We remark that, since $\mathcal{F}_d^{cm}$ is not stable by difference, it is not totally surprising that Theorem \ref{thm0} holds. Furthermore, identifying the largest space of all functions $f$ such that $E_f$ is uniquely minimized by $L_d$ in $\mathcal{L}_d(1)$ seems to be very challenging (see \cite{OptinonCM}). Therefore a natural question in order to identify a large class of potentials $f$ such that the minimality of an universal optimizer $L_d$ holds for $E_f^{\kappa}$ is the following: what are the completely monotone potentials $f\in \mathcal{F}_d^{cm}$ satisfying \eqref{eq:condthm}, i.e. such that $f_{\kappa}\in \mathcal{F}_d^{cm}$? The following corollary of Theorem \ref{thm1} gives an example of such potentials, where we define, for $s>0$ and any $A_K=\{a_k\}_{k\in K}$, $K\subset \N\backslash \{1\}$,
\begin{equation}\label{eq:zetaAK}
\mathsf{L}(A_K,s):=\sum_{k\in K} \frac{a_k}{k^s}.
\end{equation}
Notice that the notation of \eqref{eq:zetaAK} is inspired by the one of Dirichlet L-series that are generalizing the Riemann zeta function (see e.g. \cite[Chap. 10]{Cohen}). For us, the arithmetic function appearing in a Dirichlet series is simply replaced by $A_K$ and can be finite.

\begin{corollary}[\textbf{Minimality conservation for special $f$ - Non-shifted case}]\label{Cor1}
Let $d\geq 2$ and $f\in \mathcal{F}_d^{cm}$ be such that $d\mu_f(t)=\rho_f(t)dt$ and $\rho_f$ be an increasing function on $\R_+$. Let $\kappa=\{K,A_K,\emptyset\}$ be as in \eqref{eq:kappa} where $A_K=\{a_k\}_{k\in K}\subset \R_+$ and be such that $\mathsf{L}(A_K,s)$ defined by \eqref{eq:zetaAK} satisfies $\mathsf{L}(A_K,2)\leq 1$. If $L_d$ is universally optimal in $\mathcal{L}_d(1)$, then $L_d$ is the unique minimizer of $E_f^{\kappa}$ in $\mathcal{L}_d(1)$.
\end{corollary}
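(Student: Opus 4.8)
The plan is to read the corollary off from Theorem \ref{thm1}: the entire task reduces to checking that the hypotheses on $f$ and $\kappa$ force condition \eqref{eq:condthm}. By Theorem \ref{thm1}(1), since $d\mu_f(t)=\rho_f(t)\,dt$ the measure $\mu_{f_\kappa}$ associated with $f_\kappa$ is again absolutely continuous, with density
\[
\rho_{f_\kappa}(t)=\rho_f(t)-\sum_{k\in K}\frac{a_k}{k^2}\,\rho_f\!\left(\frac{t}{k^2}\right),
\]
so \eqref{eq:condthm} is nothing but the requirement $\rho_{f_\kappa}\geq 0$ on $\R_+$, i.e.\ $f_\kappa\in\mathcal{F}_d^{cm}$. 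Moreover $f\in\mathcal{F}_d^{cm}$ means $\mu_f\geq 0$, and an increasing function that is nonnegative almost everywhere is nonnegative everywhere; hence $\rho_f(t)\geq 0$ for all $t>0$.

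Next I would verify $\rho_{f_\kappa}\geq 0$ by a one-line monotonicity estimate. Fix $t>0$. For every $k\in K$ one has $k\geq 2$, hence $t/k^2\leq t$, and since $\rho_f$ is increasing, $\rho_f(t/k^2)\leq\rho_f(t)$. Multiplying by $a_k/k^2\geq 0$ and summing over $k\in K$ yields
\[
\sum_{k\in K}\frac{a_k}{k^2}\,\rho_f\!\left(\frac{t}{k^2}\right)\ \leq\ \rho_f(t)\sum_{k\in K}\frac{a_k}{k^2}\ =\ \rho_f(t)\,\mathsf{L}(A_K,2)\ \leq\ \rho_f(t),
\]
where the last step uses $\rho_f(t)\geq 0$ together with the assumption $\mathsf{L}(A_K,2)\leq 1$. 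Hence $\rho_{f_\kappa}(t)\geq 0$, which is exactly \eqref{eq:condthm}.

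With \eqref{eq:condthm} in hand, Theorem \ref{thm1}(2) gives $f_\kappa\in\mathcal{F}_d^{cm}$ and Theorem \ref{thm1}(3) gives that $L_d$, being universally optimal in $\mathcal{L}_d(1)$, is the unique minimizer of $E_f^\kappa=E_{f_\kappa}$ in $\mathcal{L}_d(1)$, which is the claim. I do not expect any real obstacle; the only points deserving a word of care are keeping the inequality oriented correctly --- this is where complete monotonicity of $f$ (hence $\rho_f\geq 0$) is used --- and, when $K$ is infinite, observing that the termwise bound $\rho_f(t/k^2)\leq\rho_f(t)$ survives summation because all summands are nonnegative and the series already converges by Theorem \ref{thm1}(1). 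The fact that the relevant quantity is $\mathsf{L}(A_K,\cdot)$ evaluated precisely at $s=2$ is not accidental: it reflects the $k^{-2}$ weighting of the dilated density $\rho_f(\cdot/k^2)$ appearing in Theorem \ref{thm1}(1).
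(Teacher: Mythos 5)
Your proof is correct and follows essentially the same route as the paper's: both deduce $\rho_f\geq 0$ from $\mu_f\geq 0$, use monotonicity of $\rho_f$ together with $t/k^2\leq t$ to bound each term, and sum against $\mathsf{L}(A_K,2)\leq 1$ to verify condition \eqref{eq:condthm} before invoking Theorem \ref{thm1}. No issues.
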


\begin{example}[Potentials satisfying the assumptions of Corollary \ref{Cor1}]\label{ex:ipyuka}
There are many examples of potentials $f$ such that Corollary \ref{Cor1} holds. For instance, this is the case for the parametrized potential $f=f_{\sigma,s}$ defined for all $r>0$ by $f_{\sigma,s}(r)= \frac{e^{-\sigma r}}{r^{s}}$, $\sigma>0$, $s>1$, since $d\mu_{f_{\sigma,s}}(t)=\frac{(t-\sigma)^{s-1}}{\Gamma(s)}\mathds{1}_{[\sigma,\infty)}(t)dt$ and $t\mapsto \frac{(t-\sigma)^{s-1}}{\Gamma(s)}\mathds{1}_{[\sigma,\infty)}(t)$ are increasing functions on $\R_+$. Notice that the inverse power law $f(r)=r^{-s}$ with exponent $s>d/2\geq 1$ (if $\sigma=0$) and the Yukawa potential $f(r)=e^{-\sigma r} r^{-1}$ with parameter $\sigma>0$ (if $s=1$) are special cases of $f_{\sigma,s}$.
\end{example}

\subsection{The inverse power law and Lennard-Jones cases}

In this subsection, we restrict our study to combinations of inverse power laws, since they are the building blocks of many interaction potentials used in molecular simulations (see e.g. \cite{Kaplan}). Their homogeneity simplifies a lot the energy computations and allows us to give a complete picture of the periodic arrays of defects effects with respect to the values of $\mathsf{L}$ defined by \eqref{eq:zetaAK}.

\medskip

In the following result, we show that the values of $\mathsf{L}(A_K, 2s)$ plays a fundamental role in the minimization of $E_f^{\kappa}$ when $f$ is an inverse power law.

\begin{thm}[\textbf{The inverse power law case - Non-shifted case}]\label{thm2ip}
Let $d\geq 2$ and $f(r)=r^{-s}$ where $s>d/2$. Let $\kappa=\{K, A_K,\emptyset\}$ be as in \eqref{eq:kappa} and be such that $\mathsf{L}(A_K, 2s)$ defined by \eqref{eq:zetaAK} is absolutely convergent. We have:
\begin{enumerate}
\item If $\mathsf{L}(A_K, 2s)<1$, then $L_0$ is a minimizer of $L\mapsto \zeta_L(2s)$ in $\mathcal{L}_d(1)$ if and only if $L_0$ is a minimizer of $E_f^{\kappa}$ in $\mathcal{L}_d(1)$.
\item  If $\mathsf{L}( A_K, 2s)>1$, then $L_0$ is a minimizer of $L\mapsto \zeta_L(2s)$ in $\mathcal{L}_d(1)$ if and only if $L_0$ is a maximizer of $E_f^{\kappa}$ in $\mathcal{L}_d(1)$.
\end{enumerate}
In particular, for any $K\subset \N\backslash \{1\}$, if $a_k=1$ for all $k \in K$, then $L\mapsto \zeta_L(2s)$ and $E_f^\kappa$ have the same minimizers in $\mathcal{L}_d(1)$.
\end{thm}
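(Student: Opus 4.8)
The plan is to exploit the homogeneity of the inverse power law $f(r)=r^{-s}$: dilating a lattice by an integer factor only rescales its $f$-energy by a scalar, and this collapses $E_f^\kappa$ into a scalar multiple of $\zeta_L(2s)$. First I would use that $P_K=\emptyset$ to write $E_f^\kappa[L]=E_f[L]-\sum_{k\in K}a_k E_f[kL]$, and then observe that $p\mapsto kp$ is a bijection of $L\setminus\{0\}$ onto $kL\setminus\{0\}$, so that
$$
E_f[kL]=\sum_{q\in L\setminus\{0\}}\big(k^2|q|^2\big)^{-s}=k^{-2s}\,\zeta_L(2s).
$$
Since $2s>d$ the series $\zeta_L(2s)$ converges absolutely, and since $\mathsf{L}(A_K,2s)=\sum_{k\in K}a_k k^{-2s}$ is absolutely convergent by hypothesis, the two summations may be exchanged, giving for every $L\in\mathcal{L}_d$ the identity
$$
E_f^\kappa[L]=\Big(1-\mathsf{L}(A_K,2s)\Big)\,\zeta_L(2s).
$$

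The conclusions then follow by reading off the sign of the density-independent constant $c:=1-\mathsf{L}(A_K,2s)$. If $\mathsf{L}(A_K,2s)<1$, then $c>0$, so minimizing $E_f^\kappa$ over $\mathcal{L}_d(1)$ is equivalent to minimizing $\zeta_L(2s)$, which is point 1; if $\mathsf{L}(A_K,2s)>1$, then $c<0$, so minimizing $E_f^\kappa$ is equivalent to maximizing $\zeta_L(2s)$, which is point 2. (In the borderline case $\mathsf{L}(A_K,2s)=1$, excluded from the statement, $E_f^\kappa\equiv 0$ and every lattice is trivially both a minimizer and a maximizer.)

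For the final assertion, when $a_k=1$ for all $k\in K$ I would estimate
$$
\mathsf{L}(A_K,2s)=\sum_{k\in K}k^{-2s}\le\sum_{k\ge 2}k^{-2s}=\zeta(2s)-1.
$$
Since $s>d/2\ge 1$ we have $2s>2$, hence $\zeta(2s)<\zeta(2)=\pi^2/6$, so that $\mathsf{L}(A_K,2s)<\pi^2/6-1<1$; thus point 1 applies and $E_f^\kappa$ and $L\mapsto\zeta_L(2s)$ share the same minimizers in $\mathcal{L}_d(1)$. I do not expect a genuine obstacle here: the only steps needing care are the justification of exchanging the absolutely convergent sums over $K$ and over $L$, and tracking the sign of $c$ when passing between minimizers and maximizers; it is precisely the numerical inequality $\pi^2/6-1<1$ that forces the pure-vacancy case $a_k=1$ to fall under point 1 for every admissible $K$ and every $d\ge 2$.
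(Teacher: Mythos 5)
Your proposal is correct and follows essentially the same route as the paper: the homogeneity identity $E_f[kL]=k^{-2s}\zeta_L(2s)$ collapses $E_f^\kappa[L]$ to $\left(1-\mathsf{L}(A_K,2s)\right)\zeta_L(2s)$, after which the sign of the prefactor gives both points. Your treatment of the final assertion via $\zeta(2s)<\zeta(2)=\pi^2/6<2$ is a slightly cleaner version of the paper's bound $\zeta(2s)-1<1$ for $2s>x_0\approx 1.73$, but it is the same argument in substance.
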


\begin{examples}[Minimizers of the Epstein zeta function]
In dimensions $d\in \{2,8,24\}$, the minimizer $L_0$ of $L\mapsto \zeta_L(2s)$ in $\mathcal{L}_d(1)$ is, respectively, $\mathsf{A}_2$, $\mathsf{E}_8$ and $\Lambda_{24}$ as consequences of \cite{Mont,CKMRV2Theta}. In dimension $d=3$, Sarnak and Str\"ombergsson have conjectured in \cite[Eq. (44)]{SarStromb} that the face-centred cubic lattice $\mathsf{D}_3$ (see Fig. \ref{fig:Lattices}) is the unique minimizer of $L\mapsto \zeta_L(2s)$ in $\mathcal{L}_3(1)$ if $s>3/2$.
\end{examples}

Many applications of point 4. of Theorem \ref{thm1} can then be shown for non-convex sums of inverse power laws, differences of Yukawa potentials or Morse potentials by following the lines of \cite{BetTheta15}. In this paper, we have chosen to focus on Lennard-Jones type potentials since it is possible to have a complete description of the effect of non-shifted periodic arrays of vacancies using the homogeneity of the Epstein zeta functions. It is also known that Lennard-Jones type potentials play an important role in molecular simulation (see e.g. \cite[Sect. 6.3]{BetTheta15} and \cite[Sect. 5.1.2]{Kaplan}).

\medskip

In our last results, we define the Lennard-Jones type potential by
\begin{equation}\label{eq:LJdef}
f(r)=\frac{c_2}{r^{x_2}}-\frac{c_1}{r^{x_1}}\quad \textnormal{where}\quad (c_1,c_2)\in (0,\infty), \quad x_2>x_1>d/2,
\end{equation}
which is a prototypical example of function where $\mu_f$ is not nonnegative everywhere, and a difference of completely monotone functions. We discuss the optimality of a universally optimal lattice $L_d$ for $E_f^{\kappa}$ with respect to the values of $\mathsf{L}( A_K, 2 x_i)$, $i\in \{1,2\}$ as well as the shape of the global minimizer of $E_f^{\kappa}$, i.e. its equivalence class in $\mathcal{L}_d$ modulo rotation and dilation (as previously defined in \cite{OptinonCM}).

\begin{thm}[\textbf{The Lennard-Jones case - Non-shifted case}]\label{thm3LJ}
Let $d\geq 2$, $f$ be defined by \eqref{eq:LJdef} and $\kappa=\{K,A_K,\emptyset\}$ be as in \eqref{eq:kappa} (possibly empty) and be such that $\mathsf{L}( A_K, 2x_i)$, $i\in \{1,2\}$ defined by \eqref{eq:zetaAK} are absolutely convergent. Let $L_d$ be universally optimal in $\mathcal{L}_d(1)$. Then:
\begin{enumerate}
\item If $\mathsf{L}(A_K, 2x_2)<\mathsf{L}( A_K, 2x_1)<1$, then for all $V>0$ such that
$$
V\leq V_\kappa:=\pi^{\frac{d}{2}}\left(  \frac{c_2(1-\mathsf{L}( A_K, 2x_2))\Gamma(x_1)}{c_1(1-\mathsf{L}( A_K, 2x_1))\Gamma(x_2)}\right)^{\frac{d}{2(x_2-x_1)}},
$$
the lattice $V^{\frac{1}{d}}L_d$ is the unique minimizer of $E_f^{\kappa}$ in $\mathcal{L}_d(V)$ and there exists $V_1>0$ such that it is not a minimizer of $E_f^{\kappa}$ for $V>V_1$. Furthermore, the shape of the minimizer of $E_f$ and $E_f^{\kappa}$ are the same in $\mathcal{L}_d$.
\item If $\mathsf{L}(A_K,2x_1)>\mathsf{L}( A_K, 2x_2)>1$, then $E_f^{\kappa}$ does not have any minimizer in $\mathcal{L}_d$ and for all $V<V_\kappa$, $V^{\frac{1}{d}}L_d$ is the unique maximizer of $E_f^{\kappa}$ in $\mathcal{L}_d(V)$.
\item If $\mathsf{L}( A_K, 2x_1)>1>\mathsf{L}( A_K, 2x_2)$, then $E_f^{\kappa}$ does not have any minimizer in $\mathcal{L}_d$ but $V^{\frac{1}{d}}L_d$ is the unique minimizer of $E_f^{\kappa}$ in $\mathcal{L}_d(V)$ for all $V>0$.
\end{enumerate}
\end{thm}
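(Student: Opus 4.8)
Since $P_K=\emptyset$, the plan is to reduce everything, via \eqref{eq:fAK}, to the explicit potential $f_\kappa$ and to the fourth point of Theorem \ref{thm1}. Using $f(k^2r)=c_2k^{-2x_2}r^{-x_2}-c_1k^{-2x_1}r^{-x_1}$ and summing the absolutely convergent series, I would first record
\begin{equation*}
f_\kappa(r)=\tilde c_2\,r^{-x_2}-\tilde c_1\,r^{-x_1},\qquad \tilde c_i:=c_i\bigl(1-\mathsf{L}(A_K,2x_i)\bigr),\quad i\in\{1,2\},
\end{equation*}
so that $f_\kappa\in\mathcal F_d$ (decay $O(r^{-x_1})$, $x_1>d/2$), with $d\mu_{f_\kappa}(t)=\rho_{f_\kappa}(t)\,dt$ and $\rho_{f_\kappa}(t)=\tilde c_2 t^{x_2-1}/\Gamma(x_2)-\tilde c_1 t^{x_1-1}/\Gamma(x_1)$ (matching the first point of Theorem \ref{thm1}); equivalently, by homogeneity of the Epstein zeta functions ($\zeta_{kL}(2s)=k^{-2s}\zeta_L(2s)$), $E_f^\kappa[L]=\tilde c_2\zeta_L(2x_2)-\tilde c_1\zeta_L(2x_1)$. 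I would then split into cases according to the sign of $1-\mathsf{L}(A_K,2x_i)$, i.e. of $\tilde c_i$; the point is that in each case $f_\kappa$ or $-f_\kappa$ is again a Lennard-Jones type potential with the same exponents, so the fourth point of Theorem \ref{thm1} applies to it (with empty defect set) and only a threshold computation remains.

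\textbf{Case 1 ($\tilde c_1,\tilde c_2>0$).} Here $f_\kappa$ is itself of the form \eqref{eq:LJdef}. I would plug $\rho_{f_\kappa}$ into $g_V$ of \eqref{eq:condgV} and, writing $\beta:=\pi V^{-2/d}$, obtain
\begin{equation*}
g_V(y)=\frac{\tilde c_2\beta^{x_2-1}}{\Gamma(x_2)}\,\phi_{x_2}(y)-\frac{\tilde c_1\beta^{x_1-1}}{\Gamma(x_1)}\,\phi_{x_1}(y),\qquad \phi_x(y):=y^{x-1}+y^{\frac d2-1-x}.
\end{equation*}
Since all factors are positive, $g_V\ge0$ on $[1,\infty)$ amounts to $\tilde c_2\Gamma(x_1)\beta^{x_2-x_1}/(\tilde c_1\Gamma(x_2))\ge\sup_{y\ge1}\phi_{x_1}(y)/\phi_{x_2}(y)$. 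The key sub-step is the elementary comparison $\phi_{x_1}(y)\le\phi_{x_2}(y)$ for $y\ge1$, with equality exactly at $y=1$ (this uses only $x_1+x_2>d/2$, as in \cite{BetTheta15}), so the supremum equals $1$ and the condition becomes exactly $V\le V_\kappa$. By the fourth point of Theorem \ref{thm1}, $V^{1/d}L_d$ is then the unique minimizer of $E_f^\kappa$ in $\mathcal L_d(V)$ whenever $V\le V_\kappa$. For the low-density statement I would fix one $L'\in\mathcal L_d(1)\setminus\{L_d\}$; since $L_d$ uniquely minimizes every Epstein zeta function, $\Delta_i:=\zeta_{L'}(2x_i)-\zeta_{L_d}(2x_i)>0$ and
\begin{equation*}
E_f^\kappa[V^{1/d}L']-E_f^\kappa[V^{1/d}L_d]=V^{-2x_1/d}\bigl(\tilde c_2 V^{-2(x_2-x_1)/d}\Delta_2-\tilde c_1\Delta_1\bigr),
\end{equation*}
whose bracket decreases in $V$ from $+\infty$ to $-\tilde c_1\Delta_1<0$; hence it is negative for $V>V_1$ for some $V_1>0$, so $V^{1/d}L_d$ is not a minimizer there. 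For the shape statement I would optimize $E_f^\kappa[V^{1/d}L']$ over $V>0$ (a one-variable minimization with a unique solution since $\tilde c_1,\tilde c_2>0$, $x_2>x_1$), getting $\inf_{V>0}E_f^\kappa[V^{1/d}L']=-C_\kappa\,\zeta_{L'}(2x_1)^{x_2/(x_2-x_1)}\zeta_{L'}(2x_2)^{-x_1/(x_2-x_1)}$ with $C_\kappa>0$ depending only on $\tilde c_1,\tilde c_2,x_1,x_2$; the optimal shape therefore maximizes a functional of $L'$ that is insensitive to the coefficients — the same functional as for $E_f$ — so the shapes coincide (cf. the Lennard-Jones analysis in \cite{OptinonCM}).

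\textbf{Cases 2 and 3.} If $\tilde c_1,\tilde c_2<0$ (i.e. $\mathsf{L}(A_K,2x_1)>\mathsf{L}(A_K,2x_2)>1$), I would set $h:=-f_\kappa=|\tilde c_2|r^{-x_2}-|\tilde c_1|r^{-x_1}$, again of the form \eqref{eq:LJdef}; applying the fourth point of Theorem \ref{thm1} to $h$ (empty defect set) and repeating the Case 1 computation with $|\tilde c_i|$ in place of $\tilde c_i$ gives $g_V^{(h)}\ge0$ on $[1,\infty)$ iff $V\le\pi^{d/2}\bigl(|\tilde c_2|\Gamma(x_1)/(|\tilde c_1|\Gamma(x_2))\bigr)^{d/(2(x_2-x_1))}$, and this bound equals $V_\kappa$ because $|\tilde c_2|/|\tilde c_1|=(1-\mathsf{L}(A_K,2x_2))/(1-\mathsf{L}(A_K,2x_1))\cdot c_2/c_1$. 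Thus $V^{1/d}L_d$ is the unique minimizer of $E_h=-E_f^\kappa$, hence the unique maximizer of $E_f^\kappa$, in $\mathcal L_d(V)$ for $V<V_\kappa$; and since $E_f^\kappa[V^{1/d}L']=-|\tilde c_2|V^{-2x_2/d}\zeta_{L'}(2x_2)+|\tilde c_1|V^{-2x_1/d}\zeta_{L'}(2x_1)\to-\infty$ as $V\to0^+$, $E_f^\kappa$ is unbounded below on $\mathcal L_d$, so it has no minimizer there. If instead $\mathsf{L}(A_K,2x_1)>1>\mathsf{L}(A_K,2x_2)$, then $\tilde c_2>0>\tilde c_1$, so $\rho_{f_\kappa}(t)=\tilde c_2 t^{x_2-1}/\Gamma(x_2)+|\tilde c_1|t^{x_1-1}/\Gamma(x_1)>0$ for all $t>0$, i.e. $f_\kappa\in\mathcal F_d^{cm}$ (second point of Theorem \ref{thm1}) and $E_f^\kappa[L]=\tilde c_2\zeta_L(2x_2)+|\tilde c_1|\zeta_L(2x_1)$ is a positive combination of Epstein zeta functions; by the third point of Theorem \ref{thm1} (or directly, since $L_d$ uniquely minimizes each $\zeta_L(2s)$, $s>d/2$, at every fixed density), $V^{1/d}L_d$ is the unique minimizer of $E_f^\kappa$ in $\mathcal L_d(V)$ for all $V>0$; finally $E_f^\kappa[V^{1/d}L_d]\to0^+$ as $V\to\infty$ while remaining strictly positive, so $\inf_{\mathcal L_d}E_f^\kappa=0$ is not attained and $E_f^\kappa$ has no global minimizer.

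\textbf{Where the difficulty is.} The whole argument leans on the fourth point of Theorem \ref{thm1}, which I would invoke as a black box. The only genuinely new (though elementary) ingredient is the comparison lemma $\phi_{x_1}\le\phi_{x_2}$ on $[1,\infty)$ that pins down $\sup_{y\ge1}\phi_{x_1}(y)/\phi_{x_2}(y)=1$ and hence the sharp constant $V_\kappa$; everything else is bookkeeping with homogeneity of the $\zeta_L$'s and with the signs of $1-\mathsf{L}(A_K,2x_i)$.
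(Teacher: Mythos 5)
Your proposal is correct, and its global strategy coincides with the paper's: use $P_K=\emptyset$ and the homogeneity $\zeta_{kL}(2s)=k^{-2s}\zeta_L(2s)$ to rewrite $E_f^\kappa=E_{f_\kappa}$ with the modified Lennard-Jones coefficients $\tilde c_i=c_i(1-\mathsf{L}(A_K,2x_i))$, then apply the fourth point of Theorem \ref{thm1} at high density and sort out the three sign patterns exactly as in the statement. The differences are in how the sub-steps are discharged. For the positivity of $g_V$, the paper isolates this as Proposition \ref{prop:BP} and proves it by a chain of derivative computations ($\tilde g_V$, $u_V$, $u_V'$), whereas you factor $g_V$ through $\phi_x(y)=y^{x-1}+y^{\frac d2-1-x}$ and use the identity $\phi_{x_2}(y)-\phi_{x_1}(y)=(y^{x_2-x_1}-1)\bigl(y^{x_1-1}-y^{\frac d2-1-x_2}\bigr)\ge0$ for $y\ge1$; this is shorter, and it buys a little more, namely that \eqref{eq:condgV} holds if and only if $V\le V_\kappa$, so $V_\kappa$ is sharp for that sufficient condition (the paper only proves sufficiency). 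For the two remaining claims of point 1, the paper simply cites \cite{OptinonCM} (Thm.~1.5(1) for the failure of minimality at large $V$, Thm.~1.11 for the coefficient-independence of the optimal shape), while you give direct arguments: the comparison with a fixed competitor $L'$ showing the bracket changes sign at some $V_1$, and the explicit optimization over the scale showing that the optimal shape maximizes $\zeta_{L'}(2x_1)^{x_2/(x_2-x_1)}\zeta_{L'}(2x_2)^{-x_1/(x_2-x_1)}$, a functional free of the coefficients. Both are correct and make the proof more self-contained; the only thing you should state explicitly in the shape step is the existence of a global minimizer (equivalently of a maximizer of that shape functional), which you in effect still import from \cite{OptinonCM} via the known existence for $E_f$ — the same dependence the paper has. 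Cases 2 and 3 (sign flip to get maximality below $V_\kappa$, collapse/dilation arguments for non-existence of a minimizer in $\mathcal{L}_d$, and complete monotonicity of $f_\kappa$ when $\tilde c_1<0<\tilde c_2$) match the paper's reasoning.
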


\begin{remark}[Increasing of the threshold value $V_\kappa$]
The fact that $1-\mathsf{L}( A_K, 2x_2)> 1-\mathsf{L}( A_K, 2x_1)$ implies that the threshold value $V_\kappa$ is larger in the $\kappa \neq \emptyset$ case than in the case without defect $\kappa=\emptyset$. The same is expected to be true for any non-convex sum of inverse power law with a positive main term as $r\to 0$ (see \cite[Prop. 6.4]{BetTheta15} for a twp-dimensional example in the no-defect case $\kappa=\emptyset$). It is also totally straightforward to show that $V_\kappa \to V_\emptyset$ as $\min K$ tend to $+\infty$.
\end{remark}

\begin{remark}[Global minimality of $\mathsf{A}_2$ among lattices for Lennard-Jones type potentials]\label{rmk:globA2}
In dimension $d=2$, the triangular lattice $L_2=\mathsf{A}_2$ has been shown in \cite[Thm. 1.2.2]{BetTheta15} to be the shape of the global minimizer of $E_f$ in $\mathcal{L}_2$ when $\pi^{-x_2}\Gamma(x_2)x_2<\pi^{-x_1}\Gamma(x_1)x_1$. Point 1. of Theorem \ref{thm3LJ} implies that the same holds when $\mathsf{L}( A_K, 2x_2)<\mathsf{L}( A_K, 2x_1)<1$.
\end{remark}

\subsection{Conclusion}

From all our results, we conclude that is possible to remove or substitute several infinite periodic sets of points from all the lattices (i.e. an integer sublattices) and to conserve the already existing minimality properties, but only in a certain class of potentials or sublattices. Physically, it means that adding point defects to a crystal can be without any effect on its ground state if we assume the interaction between atoms to be well-approximate by a pairwise potential (Born model \cite{Tosi}) and the sublattices to satisfy some simple properties. We give several examples in Section \ref{subsec:examples} and our result are the first known general results giving global optimality of ionic crystals. In particular, the Kagome lattice (see Figure \ref{fig:introionic}) is shown to be the global minimizer for the interaction energies discussed in this paper in the class of (potentially shifted) lattices $L\backslash 2L$ where $L\in \mathcal{L}_2(1)$. This is, as far as we know, the first results of this kind for the Kagome lattice. We also believe that the results and techniques derived in this paper can be applied to other ionic crystals and other general periodic systems. 

\medskip

Furthermore, this paper also shows the possibility to check the optimality of a structure while 'forgetting' many points which, in a certain sense, do not play any role (vacancy case). This allow to simplify both numerical investigations -- leading to a shorter computational time -- and mathematical estimates for these energies. We voluntarily did not explore further this fact since it is only relevant in low dimensions because the computational time of such lattice sums is exponentially growing and gives unreachable durations in dimension $d\geq 4$ for computing many values of the energies, especially in dimensions $d\in \{8,24\}$ where our global optimality results are applicable. 

\medskip

In dimension $d=3$, i.e. where the everyday life real crystals exist, our results only apply -- combined with the one from \cite{Beterminlocal3d} -- to the conservation of local minimality in the cubic lattices cases ($\Z^3$, $\mathsf{D}_3$ and $\mathsf{D}_3^*$) for the Epstein zeta function, the lattice theta function and the Lennard-Jones type energies. We believe that our result will find other very interesting applications in dimension 3 once global optimality properties will be shown for the lattice theta functions and the Epstein zeta functions (Sarnak-Str\"ombergsson conjectures \cite{SarStromb}).

\medskip

Even though the inverse power laws and Lennard-Jones cases have been completely solved here, we still ignore what is the optimal result that holds for ensuring the robustness of the universal optimality among lattices. An interesting problem would be to find a necessary condition for this robustness. Furthermore, we can also ask the following question: is it enough to study this kind of minimization problem in a (small) ball centred at the origin? In other words: can we remove all the points that are far enough from $O$ and conserving the minimality results? Numerical investigations and Figure \ref{fig:squarecharges} tend to confirm this fact, and a rigorous proof of such property would deeply simplify the analysis of such lattice energies.

\section{Applications: The Kagome lattice and other ionic structures}\label{subsec:examples}

We now give several examples of applications of our results. In particular, we identify interesting structures that are minimizers of $E_f$ in classes of sparse and charged lattices.

\subsection{The Kagome lattice}

Being the vertices of a trihexagonal tiling, this structure -- wich is actually not a lattice as we defined it in this paper -- that we will write $\mathsf{K}:=\mathsf{A}_2\backslash 2 \mathsf{A}_2$ is the difference of two triangular lattices of scale ratio $2$ (see Fig. \ref{fig:Kagome}). Some minerals -- which display novel physical properties connected with geometrically frustrated magnetism -- like jarosites and herbertsmithite contain layers having this structure (see \cite{Kagome} and references therein). We can therefore apply our results of Section \ref{subsec:main} with $\kappa=\{2,1,\emptyset\}$ or $\kappa=\{2,1,u_1+u_2 \}$. The following optimality results for $E_f$ in the class of lattices of the form $L\backslash 2L$ (or $L\backslash (2L + u_1+u_2)$ in the shifted case) are simple consequences of our results combined with the universal optimality of $\mathsf{A}_2$ among lattices proved by Montgomery in \cite{Mont}:
\begin{enumerate}
\item \textit{Universal optimality of $\mathsf{K}$.} Applying Theorem \ref{thm02} to $\kappa = \{2,1, u_1+u_2  \}$, it follows that for all $f\in \mathcal{F}_2^{cm}$, the shifted Kagome lattice $\mathsf{K}+(1/2,-\sqrt{3}/2)$ (see Fig. \ref{fig:Kagome}) is the unique minimizer of $E_f$ among lattices of the form $L\backslash (2L + u_1+u_2)$, where $L=\Z u_1 \oplus \Z u_2\in \mathcal{L}_2(1)$.
\item \textit{Minimality of $\mathsf{K}$ at all densities for certain completely monotone potentials.}  A direct consequence of Theorem \ref{thm1} is the following. For any completely monotone function $f\in \mathcal{F}_2^{cm}$ such that $d\mu_f(t)=\rho_f(t)dt$ and $\rho_f$ is an increasing function, the Kagome lattice $\mathsf{K}$ is the unique minimizer of $E_f$ among all the two-dimensional sparse lattices $L\backslash 2 L$ where $L\in \mathcal{L}_2(1)$. This is the case for instance for $f=f_{\sigma,s}$ defined in Example \ref{ex:ipyuka}, including the inverse power laws and the Yukawa potential.
\item \textit{Optimality at high density for Lennard-Jones interactions.} Applying Theorem \ref{thm3LJ}, we obtain its optimality at high density: if $f(r)=c_2r^{-x_2}-c_1 r^{-x_1}$, $x_2>x_1>1$ is a Lennard-Jones potential, then the unique minimizer of $E_f$ at high density among all the two-dimensional sparse lattices $L\backslash 2 L$, where $L$ has fixed density, has the shape of $\mathsf{K}$.
\item \textit{Global optimality for Lennard-Jones interactions with small exponents.} Furthermore, using Theorem \ref{thm3LJ} and \cite[Thm. 1.2.2]{BetTheta15} (see also Remark \ref{rmk:globA2}), we obtain the following interesting result in the Lennard-Jones potential case: if $\pi^{-x_2}\Gamma(x_2)x_2<\pi^{-x_1}\Gamma(x_1)x_1$, then the unique global minimizer of $E_f$ among all the possible sparse lattices $L\backslash 2 L$ has the shape of $\mathsf{K}$. 
\end{enumerate}

\begin{figure}[!h]
\centering
\includegraphics[width=5cm]{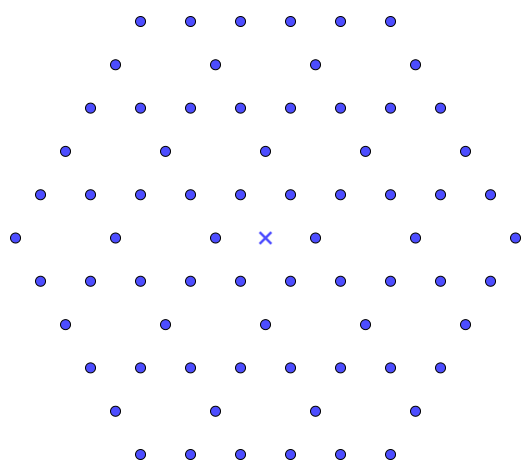} \qquad \includegraphics[width=5cm]{kagomeshift}
\caption{\small{Two patches of the Kagome lattice. On the left, the origin $O$ does not belong to $\mathsf{K}$ and is the center of one of the hexagons. On the right, $O$ belongs to a shifted version $\mathsf{K}+(1/2,-\sqrt{3}/2)$.}}
\label{fig:Kagome}
\end{figure}

These are the first minimality results for $\mathsf{K}$ in a class of periodic configurations. We recall that a non-optimality result has also been derived by Grivopoulos in \cite{Grivopoulos} for Lennard-Jones potential in the case of free particles, and different attempts have been made for obtaining numerically or experimentally a Kagome structure as an energy ground state (see e.g. \cite{HyunTorquato,Kagomecolloids,TorquatoKagome}).

\begin{remark}[The honeycomb lattice]
We notice that the honeycomb lattice $\mathsf{H}:=\mathsf{A}_2\backslash \sqrt{3}\mathsf{A}_2$, also constructed from the triangular lattice, does not belong to the set of sparse lattices $L\backslash k L$, $k\in \N$. That is why no optimality result for $\mathsf{H}$ is included in this paper.
\end{remark}

\subsection{Rock-salt vs. other ionic structures}\label{subsec:ions}
We recall that, in \cite{MaxTheta1}, we have shown with Faulhuber the universal optimality of the triangular lattice among lattices with alternating charges, i.e. the fact that $\mathsf{A}_2$ uniquely maximizes 
\begin{equation}\label{eq:madelung}
L\mapsto\theta_L^\pm(\alpha):=\sum_{p\in L} \varphi_\pm(p) e^{-\pi \alpha |p|^2} \quad \textnormal{and}\quad \zeta_L^\pm(s):=\sum_{p\in L\backslash \{0\}} \frac{\varphi_\pm(p)}{|p|^s},\quad L=\Z u_1\oplus \Z u_2,
\end{equation}
in $\mathcal{L}_2(1)$, where, for all $p=m u_1+n u_2$, $\varphi_\pm(p):=m + n $. Notice that the maximality result at all scales for the alternating lattice theta function is equivalent with the fact that $\mathsf{A}_2$ maximizes 
$$
L\mapsto E_f^\kappa[L]:=E_f[L]-2E_f[2L+u_1]-2E_f[2L+u_2],\quad \textnormal{where}\quad \kappa:=\{2, \{2,2\}, \{u_1,u_2\}\}
$$
in $\mathcal{L}_2(1)$ for any $f\in \mathcal{F}_2^{cm}$. It has been also proven in \cite[Thm. 1.4]{BeterminPetrache} that $\Z^d$ is the unique maximizer of the $d$-dimensional generalization of the two lattice energies $\theta_L^\pm(\alpha)$ and $\zeta_L^\pm(s)$ among $d$-dimensional orthorhombic (rectangular) lattices of fixed unit density, whereas it is a minimizer of the lattice theta functions and the Epstein zeta functions defined in \eqref{eq:zetatheta}.
Furthermore, applying Theorem \ref{thm1} in dimension $d=2$ (resp. any $d$), we see that $\mathsf{A}_2$ (resp. $\Z^d$) minimizes in $\mathcal{L}_2(1)$ (resp. among the orthorhombic lattices of unit density) the energy
\begin{equation}\label{eq:energyapplication}
E_f^{\kappa}[L]:= \zeta_L(s)-2\zeta_{kL}(s),\quad f(r)=r^{-s},\quad K=\{k\},\quad a_k=2,
\end{equation}
for all $s>d/2$. We remark that $\Z^d$, $d\in \{2,3\}$ is also a saddle point (see \cite{Mont,Beterminlocal3d}) of $E_f^{\kappa}$ in $\mathcal{L}_d(1)$. It is then interesting to see how the array of substitutional defects with charges $-1$ plays a totally different role for this energy (see also Fig. \ref{fig:squarecharges} and Fig. \ref{fig:tricharges}). This seems to confirm that the role of the nearest-neighbors of the origin is fundamental, since they are actually the main terms of the energy when the potential is decreasing fast at infinity.

\begin{figure}[!h]
\centering
\includegraphics[width=4cm]{grid_salt} \qquad \includegraphics[width=4cm]{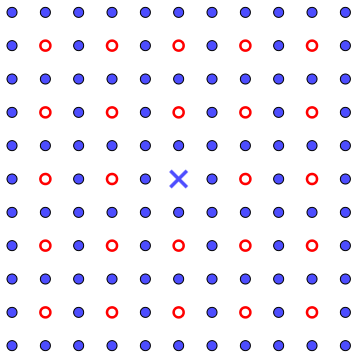}\qquad \includegraphics[width=4.2cm]{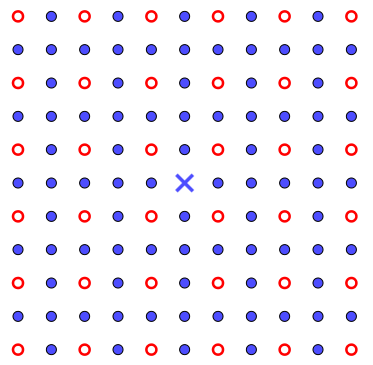}
\caption{\small{Three periodic arrays of defects on $\Z^2$. Blue points $\textcolor{blue}{\bullet}$ are points with charges $+1$ and red points $\textcolor{red}{\circ}$ are with charges $-1$. For the inverse power laws energies, the left-hand configuration is the unique maximizer among rectangular lattices of fixed density with alternation of charges whereas the centred configuration is its unique minimizer with this distribution of charges among rectangular lattices. However, the configuration on the right is a saddle point of any energy on the form $E_f$, $f\in \mathcal{F}_2^{cm}$ in this class of charged configurations. For the two structures on the left, the same is true in higher dimension while generalizing the ionic-like distribution on orthorhombic lattices.}}
\label{fig:squarecharges}
\end{figure}

\begin{figure}[!h]
\centering
\includegraphics[width=5cm]{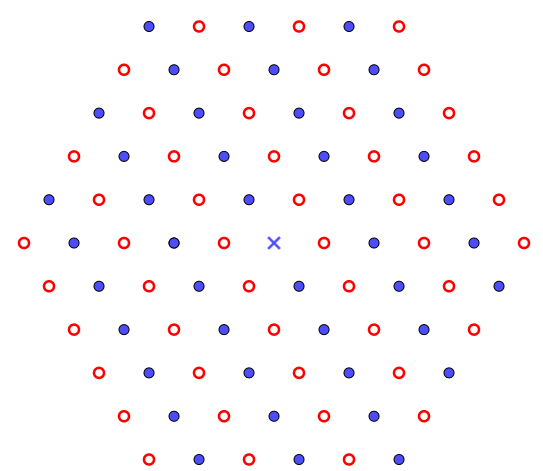} \quad \includegraphics[width=5cm]{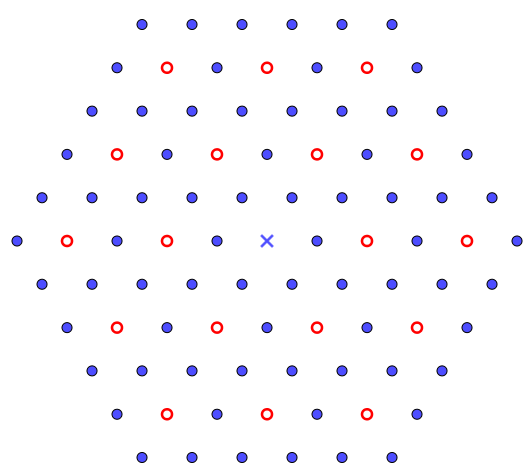} \quad \includegraphics[width=5cm]{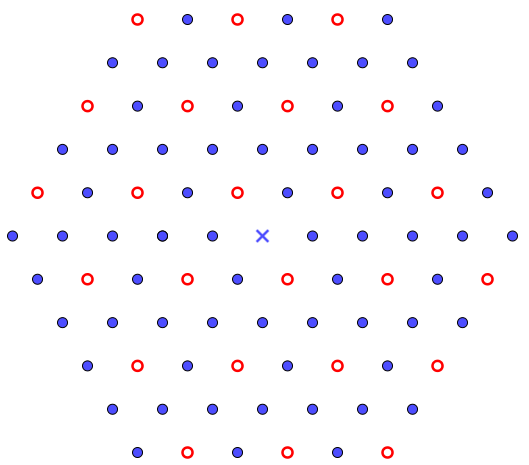}
\caption{\small{Three periodic arrays of defects on a patch of $\mathsf{A}_2$. Blue points $\textcolor{blue}{\bullet}$ are points with charges $+1$ and red points $\textcolor{red}{\circ}$ are with charges $-1$. On the left, the triangular alternate configuration maximizes $\zeta_L^\pm(s)$ in $\mathcal{L}_2(1)$ with this alternation of charges, while the configuration in the middle minimizes the inverse power law energy in this class of charged lattices. The configuration on the right minimizes any energy on the form $E_f$, $f\in \mathcal{F}_2^{cm}$ in this class of charged configurations.}}
\label{fig:tricharges}
\end{figure}

\section{Proofs of the main results}\label{sec:proof}
We first show Theorem \ref{thm0}, i.e. the non-robustness of universal optimality results under non-shifted periodic arrays of defects.

\begin{proof}[\textbf{Proof of Theorem \ref{thm0}}]
Let $\Lambda\in \{\mathsf{A}_2, \mathsf{E}_8,\Lambda_{24}\}$. We consider the potential $f(r):=e^{-\pi \alpha r}$ where $\alpha>0$. For all $k\in \N\backslash \{1\}$, all $a_k>0$ and all $L\in \mathcal{L}_d(1)$, we have, using the fact that $\theta_{k L}(\alpha)=\theta_L(k^2 \alpha)$,
$$
E_f^{\kappa}[L]=\theta_L(\alpha)-a_k\theta(k^2 \alpha).
$$
Let us show that there exists $\alpha_d$ such that for all $0<\alpha<\alpha_d$, $\Lambda$ does not minimize $E_f^{\kappa}$ in  $\mathcal{L}_d(1)$.  Indeed, we have the following equivalence: for all $L\in \mathcal{L}_d(1)\backslash \{\Lambda\}$, $E_f^{\kappa}[L]>E_f^{\kappa}[\Lambda]$ if and only if
\begin{equation}\label{eq:equiv}
\inf _{L\in \mathcal{L}_d(1) \atop L\neq \Lambda}\frac{\theta_L(\alpha)-\theta_{\Lambda}(\alpha)}{\theta_L(k^2 \alpha)-\theta_{\Lambda}(k^2 \alpha)}>a_k.
\end{equation}
Let us show that \eqref{eq:equiv} does not hold for small $\alpha$, and in particular that the left term tends to $0$ as $\alpha\to 0$. We use Coulangeon and Sch\"urmann's work \cite[Eq. (21)]{Coulangeon:2010uq}, in the lattice case, who derived the Taylor expansion of the theta function as $L\mapsto \Lambda$ in $\mathcal{L}_d(1)$. We then obtain
\begin{align*}
\lim_{L\mapsto \Lambda\atop L\neq \Lambda} \frac{\theta_L(\alpha)-\theta_{\Lambda}(\alpha)}{\theta_L(k^2 \alpha)-\theta_{\Lambda}(k^2 \alpha)}&=\frac{\displaystyle \sum_{p\in \Lambda\backslash \{0\}}\pi \alpha |p|^2\left(\pi \alpha |p|^2 -2\right) e^{-\pi \alpha |p|^2}}{\displaystyle \sum_{p\in \Lambda\backslash \{0\}}\pi \alpha k^2 |p|^2\left(\pi \alpha k^2 |p|^2 -2\right) e^{-\pi \alpha k^2 |p|^2}}\\
&=k^{-2}\frac{\displaystyle  \sum_{p\in \Lambda\backslash \{0\}}\pi \alpha |p|^4 e^{-\pi \alpha  |p|^2}-2\sum_{p\in \Lambda\backslash \{0\}} |p|^2 e^{-\pi \alpha  |p|^2}}{\displaystyle  \sum_{p\in \Lambda\backslash \{0\}}\pi \alpha k^2 |p|^4 e^{-\pi \alpha k^2  |p|^2}-2\sum_{p\in \Lambda\backslash \{0\}} |p|^2 e^{-\pi \alpha k^2  |p|^2}}.
\end{align*}
By absolute convergence, the first term of both numerator and denominator are vanishing as $\alpha \to 0$. We therefore obtain that
$$
\lim_{\alpha \to 0} \lim_{L\mapsto \Lambda\atop L\neq \Lambda} \frac{\theta_L(\alpha)-\theta_{\Lambda}(\alpha)}{\theta_L(k^2 \alpha)-\theta_{\Lambda}(k^2 \alpha)}=\lim_{\alpha\to 0} k^{-2}\frac{\displaystyle \sum_{p\in \Lambda\backslash \{0\}} |p|^2 e^{-\pi \alpha  |p|^2}}{\displaystyle \sum_{p\in \Lambda\backslash \{0\}} |p|^2 e^{-\pi \alpha k^2  |p|^2}}=0,
$$
by comparing the convergence rate of these two exponential sums that are going to $+\infty$ as $\alpha\to 0$. It follows that \eqref{eq:equiv} does not hold for $\alpha<\alpha_d$ where $\alpha_d$ depends on $d$, $k$ and $a_k$, and the proof of the first part of the theorem is completed.

\medskip

The second part of the theorem is a simple consequence of the fact that $f_\kappa$ defined by \eqref{eq:fAK} belongs to $\mathcal{F}_d^{cm}$ if $f\in\mathcal{F}_d^{cm}$ and $a_k<0$ for all $k\in K$.
\end{proof}

The proof of our second result, namely Theorem \ref{thm02}, is a direct and simple consequence of our work \cite{MaxTheta1}.
\begin{proof}[\textbf{Proof of Theorem \ref{thm02}}]
If $p_{i,k}/k=c_L$ modulo $Q_L$ for all $k\in K$ and all $i\in I_k$, we obtain
$$
E_f^\kappa[L]=E_f[L]-\sum_{k\in K} a_k \sum_{i\in I_k}\sum_{p\in L\backslash \{0\}} f\left( k^2 \left| \frac{p_{i,k}}{k} + p \right|^2\right)=E_f[L]-\sum_{k\in K} a_k \sharp L_k E_{f(k^2 \cdot)}[L+c_L].
$$
As proved in \cite{MaxTheta1}, for any $f\in \mathcal{F}_2^{cm}$, $\mathsf{A}_2$ is the unique maximizer of $L\mapsto E_f[L+c_L]$ in $\mathcal{L}_2(1)$. It follows that $\mathsf{A}_2$, which uniquely minimizes $E_f$ in $\mathcal{L}_2(1)$ is the unique minimizer of $E_f^\kappa$ in $\mathcal{L}_2(1)$ since $a_k>0$ for all $k\in K$.
\end{proof}

We now show Theorem \ref{thm1} which gives a simple criterion for the conservation of the minimality of a universal optimizer.

\begin{proof}[\textbf{Proof of Theorem \ref{thm1}}]
In order to show the three first points, it is sufficient to show the first point of our theorem, i.e. the fact that $d\mu_{f_{\kappa}}(t)=\left(\rho_f(t)-\sum_{k \in K}a_k k^{-2}\rho_f\left(\frac{t}{k^2}\right)\right)dt$. We remark that $\rho_f$ is the inverse Laplace transform of $f$, i.e. $\rho_f(t)=\mathcal{L}^{-1}[f](t)$. By linearity, it follows that
$$
d\mu_{f_{\kappa}}(t)=\rho_{f_{\kappa}}(t)dt,\quad \textnormal{where}\quad \rho_{f_{\kappa}}(t)=\rho_f(t)-\sum_{k\in K} a_k \mathcal{L}^{-1}[f(k^2\cdot)](t).
$$
By the basic properties of the inverse Laplace transform, we obtain that, for all $t>0$, 
$$
\mathcal{L}^{-1}[f(k^2\cdot)](t)=k^{-2}\mathcal{L}^{-1}[f](k^{-2}t)=k^{-2}\rho_f(k^{-2}t),
$$ 
and our result follows by the universal optimality of $L_{d}$ in $\mathcal{L}_d(1)$ and the definition of completely monotone function.

\medskip

To show the last point of our theorem, we adapt \cite[Thm 1.1]{BetTheta15}. Let $L\in \mathcal{L}_d(1)$ and $V>0$, then we have
\begin{align}
E_f^\kappa[V^{\frac{1}{d}}L]&=\sum_{p\in L\backslash \{0\}} f_\kappa\left(V^{\frac{2}{d}}|p|^2\right)=\int_0^\infty \left[ \theta_L\left( \frac{V^{\frac{2}{d}} t}{\pi} \right)-1\right]\rho_{f_\kappa}(t)dt\nonumber\\
&=\frac{\pi}{V^{\frac{2}{d}}}\int_0^\infty \left[ \theta_L(y)-1 \right]\rho_{f_\kappa}\left( \frac{\pi y}{V^{\frac{2}{d}}} \right)dy\nonumber\\
&=\frac{\pi}{V^{\frac{2}{d}}}\int_0^1 \left[ \theta_L(y)-1 \right]\rho_{f_\kappa}\left( \frac{\pi y}{V^{\frac{2}{d}}} \right)dy+\frac{\pi}{V^{\frac{2}{d}}}\int_1^\infty \left[ \theta_L(y)-1 \right]\rho_{f_\kappa}\left( \frac{\pi y}{V^{\frac{2}{d}}} \right)dy\nonumber\\
&=\frac{\pi}{V^{\frac{2}{d}}}\int_1^\infty\left[ \theta_L\left( \frac{1}{y} \right)-1\right]\rho_{f_\kappa}\left(\frac{\pi}{y V^{\frac{2}{d}}}  \right)y^{-2}dy + \frac{\pi}{V^{\frac{2}{d}}}\int_1^\infty \left[ \theta_L(y)-1 \right]\rho_{f_\kappa}\left( \frac{\pi y}{V^{\frac{2}{d}}} \right)dy. \label{rewrite1}
\end{align}
A simple consequence of the Poisson summation formula is the well-known identity (see e.g. \cite[Eq. (43)]{ConSloanPacking})
\begin{equation}\label{jacobi}
\forall y>0,\quad \theta_L\left(\frac{1}{y} \right)=y^{\frac{d}{2}}\theta_{L^*}(y).
\end{equation}
From \eqref{jacobi}, we see that if $L_d$ is the unique minimizer of $L\mapsto \theta_L(\alpha)$ for all $\alpha>0,L\in\mathcal{L}_d(1)$ then $L_d^*=L_d$. From \eqref{rewrite1} and \eqref{jacobi}, for all $V>0, L\in \mathcal{L}_d(1)$, we have
\begin{align}\label{rewrite2}
E_f^\kappa[V^{\frac{1}{d}}L]&=\frac{\pi}{V^{\frac{2}{d}}}\int_1^\infty\left[ y^{\frac{d}{2}}\theta_{L^*}\left( y\right)-1\right]\rho_{f_\kappa}\left(\frac{\pi}{y V^{\frac{2}{d}}}  \right)y^{-2}dy + \frac{\pi}{V^{\frac{2}{d}}}\int_1^\infty \left[ \theta_L(y)-1 \right]\rho_{f_\kappa}\left( \frac{\pi y}{V^{\frac{2}{d}}} \right)dy.
\end{align}
and 
\begin{align}\label{rewrite3}
E_f^\kappa[V^{\frac{1}{d}}L]-E_f^\kappa[V^{\frac{1}{d}}L_d]&=\frac{\pi}{V^{\frac{2}{d}}}\int_1^\infty\left[ \theta_{L^*}\left( y\right)-\theta_{L_d}(y)\right]\rho_{f_\kappa}\left(\frac{\pi}{y V^{\frac{2}{d}}}  \right)y^{\frac{d}{2}-2}dy \nonumber\\
&\quad \quad + \frac{\pi}{V^{\frac{2}{d}}}\int_1^\infty \left[ \theta_L(y)-\theta_{L_d}(y) \right]\rho_{f_\kappa}\left( \frac{\pi y}{V^{\frac{2}{d}}} \right)dy.
\end{align}
By \eqref{rewrite3} and the definition of $g_V$, if $V$ is such that $g_V(y)\geq 0$ for a.e. $y\geq 1$ then
\begin{align}\label{ineq1}
&E_f^\kappa[V^{\frac{1}{d}}L]-E_f^\kappa[V^{\frac{1}{d}}L_d]+E_f^\kappa[V^{\frac{1}{d}}L^*]-E_f^\kappa[V^{\frac{1}{d}}L_d]\nonumber\\
&=\frac{\pi}{V^{\frac{2}{d}}}\int_1^\infty\left[ \theta_{L^*}\left( y\right)-\theta_{L_d}(y)\right]g_V(y)dy+\frac{\pi}{V^{\frac{2}{d}}}\int_1^\infty \left[ \theta_L(y)-\theta_{L_d}(y) \right]g_V(y)dy\nonumber\\
&\geq \frac{\pi}{V^{\frac{2}{d}}}\int_1^\infty m_L(y)g_V(y)dy,
\end{align}
where 
$$
m_L(y):=\min\{\theta_{L^*}\left( y\right)-\theta_{L_d}(y),  \theta_L(y)-\theta_{L_d}(y) \}.
$$
Since $m_L(y) \geq 0$ for all $L\in \mathcal{L}_d(1), y>0$ with equality if and only if $L=L_d$, and $g_V(y)\ge 0$ for a.e. $y\in [1,\infty)$, we get from \eqref{ineq1} that
\[
E_f^\kappa[V^{\frac{1}{d}}L]+E_f^\kappa[V^{\frac{1}{d}}L^*]\geq 2 E_f^\kappa[V^{\frac{1}{d}}L_d],\quad\mbox{with equality if and only if }L=L_d.
\]
It follows that $L_d$ is the unique minimizer of $L\mapsto E_f^\kappa[V^{\frac{1}{d}}L]$ on $\mathcal{L}_d(1)$, or equivalently that $V^{\frac{1}{d}}L_d$ is the unique minimizer of $E_f^\kappa$ in $\mathcal{L}_d(V)$, and the result is proved.
\end{proof}

The previous proof contains the main ingredients for showing Theorem \ref{thm02max}.

\begin{proof}[\textbf{Proof of Theorem \ref{thm02max}}]
Following exactly the same sequence of arguments as in the proof of the fourth point of Theorem \ref{thm1}, we obtain the maximality result of $V^{\frac{1}{d}}L_d^\pm$ at fixed density for $E_f^\pm$. Indeed, \eqref{jacobi} is replaced by
$$
\theta_L^\pm(\alpha)=y^{\frac{d}{2}}\theta_{L^*+c_{L^*}}(\alpha),
$$
and, by using the maximality of $L_d^\pm$ for $L\mapsto \theta_L^\pm(\alpha)$ and $L\mapsto \theta_{L+c_L}(\alpha)$ for all $\alpha>0$, we obtain
\begin{align}
&E_f^\pm [V^{\frac{1}{d}}L]-E_f^\pm[V^{\frac{1}{d}}L_d]+E_f^\pm[V^{\frac{1}{d}}L^*]-E_f^\pm[V^{\frac{1}{d}}L_d]\nonumber\\
&=\frac{\pi}{V^{\frac{2}{d}}}\int_1^\infty\left[ \theta_{L^*+c_{L^*}}\left( y\right)-\theta_{L_d^\pm +c_{L_d^\pm}}(y)\right]g_V(y)dy+\frac{\pi}{V^{\frac{2}{d}}}\int_1^\infty \left[ \theta_L^\pm (y)-\theta_{L_d^\pm}^\pm(y) \right]g_V(y)dy\nonumber\\
&\leq \frac{\pi}{V^{\frac{2}{d}}}\int_1^\infty m_L^\pm(y)g_V(y)dy,
\end{align}
where 
$$
m_L^\pm(y):=\max\{  \theta_{L^*+c_{L^*}}\left( y\right)-\theta_{L_d^\pm +c_{L_d^\pm}}(y),  \theta_L^\pm (y)-\theta_{L_d^\pm}^\pm(y)  \}.
$$
We again remark that $m_L^\pm(y)\leq 0$ for all $L\in \mathcal{L}_d(1)$, $y>0$ with equality if and only if $L=L_d^\pm$. Therefore, the positivity of $g_V$ as well as the universal maximality of $L_d^\pm$ implies in the same way that $V^{\frac{1}{d}}L_d^\pm$ is the unique maximizer of $E_f^\pm$ in $\mathcal{L}_d(V)$.
\end{proof}

The proof of Corollary \ref{Cor1} is a straightforward consequence of Theorem \ref{thm1}.

\begin{proof}[\textbf{Proof of Corollary \ref{Cor1}}]
Let $A_K:=\{a_k\}_{k\in K}\subset \R_+$ be such that $\mathsf{L}(A_K ,2)\leq 1$. Since $\mu_f\geq 0$, it follows that $\rho_f$ is positive, and furthermore  $\rho_f$ is increasing by assumption. Therefore, we have, for all $t>0$,
$$
\sum_{k \in K} \frac{a_k}{k^2}\rho_f\left( \frac{t}{\alpha^2} \right)\leq\sum_{k \in K} \frac{a_k}{k^2} \rho_f(t)=\mathsf{L}( A_K, 2)\rho_f(t)\leq \rho_f(t),
$$
where the first inequality is obtained from the monotonicity of $\rho_f$ and the last one from its positivity and the fact that $\mathsf{L}(A_K, 2)\leq 1$. The proof is completed by applying Theorem \ref{thm1}.
\end{proof}

We now show Theorem \ref{thm2ip} which is a simple consequence of the homogeneity of the Epstein zeta function and a property of the Riemann zeta function.

\begin{proof}[\textbf{Proof of Theorem \ref{thm2ip}}]
Using the homogeneity of the Epstein zeta function, we obtain
$$
E_f^{\kappa}[L]=\sum_{p\in L\backslash \{0\}} \frac{1}{|p|^{2s}}-\sum_{k \in K}\sum_{p\in L\backslash \{0\}}\frac{a_k}{k^{2s}|p|^{2s}}=\left(1-\mathsf{L}(A_K, 2s)\right)\zeta_L(2s),
$$
the exchange of sums being ensured by their absolute summability. If $\mathsf{L}( A_K, 2s)<1$, then $L\mapsto \zeta_L(2s)$ and $E_f^{\kappa}$ have exactly the same minimizer. If $\mathsf{L}(A_K, 2s)>1$, then the optimality are reversed and the proof is complete.

Furthermore, if $a_k=1$ for all $k \in K$, then we have
$$
\mathsf{L}(A_K, 2s)=\sum_{k \in K} \frac{1}{k^{2s}}\leq \zeta(2s)-1,
$$
where $\zeta(s):=\sum_{n\in \N} n^{-s}$ is the Riemann zeta function. Since $\zeta(x)< 2$ on $(0,\infty)$ if and only if $x> x_0\approx 1.73$, it follows that $\zeta(2s)-1<1$ if and only if $s>x_0/2\approx 0.865$ which is true for all $s>d/2$ whenever $d\geq 2$. We thus have $\mathsf{L}(A_K, 2s)<1$ and the proof is completed by application of point 1. of the theorem.
\end{proof}

Before proving Theorem \ref{thm3LJ}, we derive the following result, a generalization of our two-dimensional theorem \cite[Prop. 6.11]{BetTheta15}. Its proof follows the same main arguments as the two-dimensional version and it is a consequence of point 4. of Theorem \ref{thm1}.

\begin{prop}[\textbf{Optimality at high density for Lennard-Jones type potentials}]\label{prop:BP}
Let $f(r)=\frac{b_2}{r^{x_2}}-\frac{b_1}{r^{x_1}}$ where $b_1,b_2\in (0,\infty)$ and $x_2>x_1>d/2$, and let $L_d$ be universally optimal in $\mathcal{L}_d(1)$. If 
$$
V\leq \pi^{\frac{d}{2}} \left( \frac{b_2\Gamma(x_1)}{b_1\Gamma(x_2)} \right)^{\frac{d}{2(x_2-x_1)}},
$$ 
then $V^{\frac{1}{d}}L_d$ is the unique minimizer of $E_f$ in $\mathcal{L}_d(V)$.
\end{prop}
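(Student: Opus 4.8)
The plan is to invoke the fourth point of Theorem \ref{thm1} in the no-defect case $\kappa=\emptyset$, so that $f_\kappa=f$ and $\rho_{f_\kappa}=\rho_f$, and to check that the positivity condition \eqref{eq:condgV} holds precisely when $V$ lies below the stated threshold. First I would identify $\rho_f$: since $r^{-x}=\Gamma(x)^{-1}\int_0^\infty t^{x-1}e^{-rt}\,dt$ for $x>0$, the inverse Laplace transform of $r^{-x}$ is $t^{x-1}/\Gamma(x)$, hence by linearity
$$\rho_f(t)=\frac{b_2}{\Gamma(x_2)}\,t^{x_2-1}-\frac{b_1}{\Gamma(x_1)}\,t^{x_1-1}.$$
Note in passing that $f\in\mathcal{F}_d$ because $f(r)\sim -b_1r^{-x_1}$ as $r\to\infty$ with $x_1>d/2$, and $d\mu_f=\rho_f(t)\,dt$ as required by Theorem \ref{thm1}.

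Next I would substitute into $g_V$ from \eqref{eq:condgV}. Writing $c:=\pi V^{-2/d}$ and using the homogeneity of the powers, the four resulting terms collapse to
$$g_V(y)=\frac{b_2\,c^{x_2-1}}{\Gamma(x_2)}\,h_{x_2}(y)-\frac{b_1\,c^{x_1-1}}{\Gamma(x_1)}\,h_{x_1}(y),\qquad h_x(y):=y^{x-1}+y^{\frac d2-1-x}.$$
Thus $g_V(y)\ge 0$ for all $y\ge 1$ is equivalent to $\dfrac{b_2\Gamma(x_1)}{b_1\Gamma(x_2)}\,c^{x_2-x_1}\ge \dfrac{h_{x_1}(y)}{h_{x_2}(y)}$ for all $y\ge 1$.

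The one substantive observation is the monotonicity of $x\mapsto h_x(y)$ for fixed $y\ge 1$. One has $\partial_x h_x(y)=(\ln y)\bigl(y^{x-1}-y^{\frac d2-1-x}\bigr)$, which is nonnegative since $\ln y\ge 0$ and $x-1\ge \frac d2-1-x$ whenever $x\ge d/4$; this holds here because $x_1,x_2>d/2>d/4$. (Equivalently, $h_x(y)=2y^{d/4-1}\cosh\!\bigl((x-d/4)\ln y\bigr)$, and $\cosh$ is increasing on $[0,\infty)$ while $x-d/4>0$.) Consequently $h_{x_1}(y)\le h_{x_2}(y)$, i.e. $h_{x_1}(y)/h_{x_2}(y)\le 1$ for every $y\ge 1$, and it suffices to have $\dfrac{b_2\Gamma(x_1)}{b_1\Gamma(x_2)}\,c^{x_2-x_1}\ge 1$. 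Substituting $c=\pi V^{-2/d}$, rearranging, and raising to the positive power $\frac{d}{2(x_2-x_1)}$ turns this into exactly $V\le \pi^{d/2}\bigl(b_2\Gamma(x_1)/(b_1\Gamma(x_2))\bigr)^{d/(2(x_2-x_1))}$.

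Having verified \eqref{eq:condgV} for this range of $V$, the fourth point of Theorem \ref{thm1} applies directly and gives that $V^{1/d}L_d$ is the unique minimizer of $E_f$ in $\mathcal{L}_d(V)$. There is no serious obstacle here: the argument is essentially the same as the two-dimensional case \cite[Prop. 6.11]{BetTheta15}, and the only point requiring genuine care is the monotonicity of $h_x$ in $x$ — which is where the hypothesis $x_i>d/2$ is used — together with routine bookkeeping of the exponent $\frac d2-2$ appearing in $g_V$ and of the inequality direction when exponentiating.
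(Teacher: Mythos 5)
Your proposal is correct, and it follows the paper's overall strategy (apply point 4 of Theorem \ref{thm1} with $\kappa=\emptyset$ and verify \eqref{eq:condgV} for $V$ below the threshold), but the verification of $g_V\geq 0$ on $[1,\infty)$ is carried out differently and, in my view, more transparently. The paper factors $g_V(y)=\frac{y^{\frac d2-x_2-1}}{\alpha^{x_1-1}}\tilde g_V(y)$ with $\alpha=V^{2/d}$ and runs a two-level calculus argument: it writes $\tilde g_V'(y)=y^{x_2-x_1-1}u_V(y)$, shows $u_V'\geq 0$ on $[1,\infty)$ under the volume constraint, checks $u_V(1)\geq 0$ and $g_V(1)\geq 0$, and concludes that $\tilde g_V$ (hence $g_V$) is nonnegative for $y\geq 1$. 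You instead group the four monomials of $g_V$ into the two symmetric combinations $h_x(y)=y^{x-1}+y^{\frac d2-1-x}=2y^{\frac d4-1}\cosh\bigl((x-\tfrac d4)\ln y\bigr)$ and reduce everything to the single inequality $h_{x_1}(y)\leq h_{x_2}(y)$ for $y\geq 1$, which is immediate from the monotonicity of $\cosh$ since $x_2>x_1>d/2>d/4$. Your route avoids differentiation entirely and has the added benefit of making the threshold transparent: since $h_{x_1}(1)=h_{x_2}(1)=2$, the supremum of $h_{x_1}/h_{x_2}$ over $[1,\infty)$ equals $1$, so the stated bound on $V$ is exactly the largest volume for which condition \eqref{eq:condgV} holds for this $f$ (it says nothing, of course, about sharpness of the minimality itself). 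The computation of $\rho_f$, the sign bookkeeping, and the final exponentiation to recover $V\leq \pi^{d/2}\bigl(b_2\Gamma(x_1)/(b_1\Gamma(x_2))\bigr)^{d/(2(x_2-x_1))}$ are all correct.
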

\begin{proof}[\textbf{Proof of Proposition \ref{prop:BP}}]
We follow the lines of \cite[Prop. 6.10]{BetTheta15} and we apply point 4. of Theorem \ref{thm1}. For $i\in \{1,2\}$, let $\beta_i:=b_i\frac{\pi^{x_i-1}}{\Gamma(x_i)}$ and $\alpha:=V^{\frac{2}{d}}$, then $g_V(y)=\frac{y^{\frac{d}{2}-x_2-1}}{\alpha^{x_1-1}}\tilde{g}_V(y)$ where $g_V$ is given by \eqref{eq:condgV} and
$$
\tilde{g}_V(y):=\frac{\beta_2}{\alpha^{x_2-x_1}}y^{2x_2-\frac{d}{2}}-\beta_1 y^{x_2+x_1-\frac{d}{2}}-\beta_1 y^{x_2-x_1} + \frac{\beta_2}{\alpha^{x_2-x_1}}.
$$
We therefore compute $\tilde g_V'(y)=y^{x_2-x_1-1}u_V(y)$ where 
$$
u_V(y):=\beta_2\left(2x_2-\frac{d}{2}\right)\frac{y^{x_2+x_1-\frac{d}{2}}}{\alpha^{x_2-x_1}}-\beta_1\left(x_2+x_1-\frac{d}{2}\right)y^{2x_1-\frac{d}{2}}-\beta_1(x_2-x_1).
$$
Differentiating again, we obtain 
$$
u_V'(y)=\left( x_2+x_1-\frac{d}{2} \right)y^{2x_1-\frac{d}{2}-1}\left(\beta_2\left(2x_2-\frac{d}{2}\right)\frac{y^{x_2-x_1}}{\alpha^{x_2-x_1}}-\beta_1\left(2x_1-\frac{d}{2}\right)  \right),
$$
and we have that $u_V'(y)\geq 0$ if and only if $y\geq \left( \frac{\beta_1(2x_1-\frac{d}{2})}{\beta_2(2x_2-\frac{d}{2})} \right)^{\frac{1}{x_2-x_1}}\alpha$. By assumption, we know that
$$
\alpha\leq \pi\left(\frac{a_2\Gamma(x_1)}{a_1\Gamma(x_2)}  \right)^{\frac{1}{x_2-x_1}}=\left( \frac{\beta_2}{\beta_1}\right)^{\frac{1}{x_2-x_1}}<\left( \frac{\beta_2(2x_2-\frac{d}{2})}{\beta_1(2x_1-\frac{d}{2})}\right)^{\frac{1}{x_2-x_1}},
$$
which implies that $u_V'(y)\geq 0$ for all $y\geq 1$. We now remark that
$$
u_V(1)=\left(2x_2-\frac{d}{2}\right)\left(\frac{\beta_2}{\alpha^{x_2-x_1}}-\beta_1  \right)\geq 0,
$$
by assumption, since $p>d/2>d/4$ and  
\begin{equation}\label{eq:equivalphagamma}
\displaystyle \alpha\leq \pi\left(\frac{b_2\Gamma(x_1)}{b_1\Gamma(x_2)}  \right)^{\frac{1}{x_2-x_1}}\iff \frac{\beta_2}{\alpha^{x_2-x_1}}-\beta_1  \geq 0.
\end{equation}
It follows that $g_V'(y)\geq 0$ for all $y\geq 1$. Since
$$
g_V(1)=2\left( \frac{\beta_2}{\alpha^{x_2-1}} - \frac{\beta_1}{\alpha^{x_1-1}}\right)\geq 0
$$
again by \eqref{eq:equivalphagamma}, $g_V(y)\geq 0$ for all $y\geq 1$ and the proof is complete.
\end{proof}

\begin{proof}[\textbf{Proof of Theorem \ref{thm3LJ}}]
Let $A_K=\{a_k\}_{k\in K}$ for some $K\subset \N\backslash \{1\}$ and $f(r)=c_2 r^{-x_2}-c_1r^{-x_1}$, then we have, using the homogeneity of the Epstein zeta function,
\begin{align*}
E_f^{\kappa}[L]&=c_2 \zeta_L(2x_2)-c_1\zeta_L(2x_1)-\sum_{k\in K} a_k\left( c_2\zeta_{kL}(2x_2)-c_1\zeta_{kL}(2x_1) \right)\\
&=c_2\left(1-\mathsf{L}( A_K, 2x_2)\right)\zeta_L(2x_2)-c_1\left(1-\mathsf{L}(A_K, 2x_1))\right)\zeta_L(2x_1).
\end{align*}
We now assume that $\mathsf{L}(A_K, 2x_2)<\mathsf{L}(A_K, 2x_1)<1$. Therefore, the first part of point 1. is a simple consequence of Prop.\ref{prop:BP} applied for the coefficients $b_i=c_i\left( 1-\mathsf{L}( A_K, 2x_i) \right)>0$ where $i\in \{1,2\}$. The fact that $E_f^{\kappa}$ is not minimized by $L_d$ for $V$ large enough is a direct application of \cite[Thm. 1.5(1)]{OptinonCM} since $\mu_f$ is negative on $(0,r_0)$ for some $r_0$ depending on the parameters $c_1,c_2,x_1,x_2,A_K$. Furthermore, the fact that the shape of the minimizers are the same follows from \cite[Thm. 1.11]{OptinonCM} where it is shown that the minimizer of the Lennard-Jones type lattice energies does not depend on the coefficients $b_1,b_2$ but only on the exponents $x_1,x_2$, which are the same for $f$ and $f_{\kappa}$.

If $\mathsf{L}( A_K, 2x_1)>\mathsf{L}( A_K, 2x_2)>1$, then $f_{\kappa}(r)=-b_2 r^{-x_2}+b_1 r^{-x_1}$ where $b_i:=c_i\left( \mathsf{L}( A_K, 2x_i)-1\right)>0$, $i\in \{1,2\}$. If follows that $f_{\kappa}(r)$ tends to $-\infty$ as $r\to 0$, which implies the same for $E_f^{\kappa}[L]$ as $L$ has its lengths going to $0$ and $+\infty$, i.e. when $L$ collapses. This means that $E_f^{\kappa}$ does not have a minimizer in $\mathcal{L}_d(V)$ and in $\mathcal{L}_d$. Furthermore, combining point 1. with the fact that the signs of the coefficients are switched, we obtain the maximality of $V^{1/d} L_d$ at high density (i.e. low volume $V<V_\kappa$).

If $\mathsf{L}( A_K, 2x_1)>1>\mathsf{L}(A_K, 2x_2)$, then $f_{\kappa}(r)=b_2 r^{-x_2}+b_1 r^{-x_1}$ where $b_1:=c_1\left( \mathsf{L}( A_K, 2x_1)-1\right)>0$ and $b_2:=c_2\left(1-\mathsf{L}( A_K, 2x_2)\right)>0$. Therefore $f_{\kappa}\in \mathcal{F}_d^{cm}$, which implies the optimality of $V^{1/d} L_d$ in $\mathcal{L}_d(V)$ for all fixed $V>0$ and the fact that $E_f^{\kappa}[L]$ tends to $0$ as all the points are sent to infinity, i.e. $E_f^{\kappa}$ does not have a minimizer in $\mathcal{L}_d$.

\end{proof}

\noindent\textbf{Acknowledgement:} I am grateful for the support of the WWTF research project "Variational Modeling of Carbon Nanostructures" (no. MA14-009) and the (partial) financial support from the Austrian Science Fund (FWF) project F65. I also thank Mircea Petrache for our discussions about the crystallization at fixed density as a consequence of Cohn-Kumar conjecture stated in Remark \ref{rmk:CK}.

\newpage

{\small \bibliographystyle{plain}
\bibliography{Sparse}}

\end{document}